\newtheorem{defn}{Definition}
\newtheorem{thm}{Theorem}
\newtheorem{lem}{Lemma}
\newtheorem{prop}{Proposition}
\newtheorem{cor}{Corollary}
\newtheorem{rem}{Remark}
\newtheorem{assump}{Assumption}
\newtheorem{prob}{Problem}
\newcommand{\e}{\varepsilon}
\newcommand{\G}{\mathcal{G}}
\newcommand{\V}{\mathbb{V}}
\newcommand{\E}{\mathcal{E}}
\newcommand{\EE}{\mathbb{E}}
\newcommand{\R}{\mathbb{R}}
\newcommand{\PP}{\mathbb{P}}
\newcommand{\Id}{\mathrm{Id}}
\definecolor{forestgreen}{rgb}{0.13, 0.55, 0.13}
\begin{document}
\title{Network Identification for Diffusively-Coupled Networks with Minimal Time Complexity}
\author{Miel Sharf~%~\IEEEmembership{Member,~IEEE} 
and Daniel Zelazo,~\IEEEmembership{Senior~Member,~IEEE,}
\thanks{M. Sharf is with the Department of Decision and Control Systems, KTH Royal Institute of Technology, Stockholm, Sweden as well as Digital Futures {\tt \small sharf@kth.se}. D. Zelazo is with the Faculty of Aerospace Engineering, Israel Institute of Technology, Haifa, Israel.
%\thanks{M. Sharf is with Jether Energy Research, Tel Aviv, Israel {\tt \small mielsharf@gmail.com}. D. Zelazo is with the Faculty of Aerospace Engineering, Israel Institute of Technology, Haifa, Israel.
    {\tt\small dzelazo@technion.ac.il}.  }
}%\author

\maketitle

\begin{abstract}
The theory of network identification, namely identifying the (weighted) interaction topology among a known number of agents, has been widely developed for linear agents. However, the theory for nonlinear agents using probing inputs is far less developed, relying on dynamics linearization, \textcolor{black}{and thus cannot be applied to networks with non-smooth or discontinuous dynamics}. We use global convergence properties of the network, which can be assured using passivity theory, to present a network identification method for nonlinear agents. We do so by linearizing the steady-state equations rather than the dynamics, achieving a sub-cubic time algorithm for network identification. We also study the problem of network identification from a complexity theory standpoint, showing that the presented algorithms are optimal in terms of time complexity. We  demonstrate the presented algorithm in two case studies \textcolor{black}{ with discontinuous dynamics}.
\end{abstract}

\vspace{-10pt}
\section{Introduction}\label{sec.Intro}
Multi-agent networks (MANs) have been in the pinnacle of research in the last few years, both for their variety of applications and their deep theoretical framework. They have been applied in a wide range of domains, including formation control and distributed computing \cite{Mesbahi2010}. One of the most important aspects in MANs is the information-exchange layer, governing the interaction of agents with one another. Identifying the underlying network of a MAN from measurements is of great importance in many domains, including privacy \cite{Zheleva2012}, \textcolor{black}{(mis-)information spread \cite{Zhang2021}, epidemiological models \cite{Prasse2020}}, and brain connectivity in neuroscience \cite{Sakkalis2011}.

Network identification takes many different forms in the literature. The two main paradigms for network identification are topology reconsturction \cite{Materassi2012} and single-operator identification \textcolor{black}{\cite{Dankers2015,Bazanella2019}}. Different problem definitions also vary by the level of interaction with the network and the available data. First, network identification can be based on unmeasured intrinsic persistent excitations \cite{Chu2008}. Second, network identification can be based on naturally occurring, but measured excitations \cite{Yuan2011}. Third, network identification can be based on interacting with the network, e.g. by injecting specially designed inputs \textcolor{black}{\cite{Timme2007,vanWaarde2019, Sharf2018b}} or by node knockout \cite{NabiAbdolyousefi2012,NabiAbdolyousefi2014}. 
Seeveral methods have been applied to network identification, including spectral methods \cite{Mauroy2017}, information-theoretic methods such as Granger causality \cite{Vicente2011,Bressler2011,Materassi2010}, compressed sensing and sparsity-promoting optimization \textcolor{black}{\cite{Fonken2022,Gardner2003,Guangjun2015,Zhang2021}}, and Wiener filtering \cite{Materassi2012}.

The network identification literature for nonlinear agents and/or interactions is not as developed. Some works linearize the network dynamics and then use sparse recovery to identify the connecting matrix \cite{Gardner2003,Guangjun2015,Prasse2020}. Other works use adaptive observers to find the network structure, while assuming the Lipschitzity of certain elements in the dynamics \cite{Chen2009,Burbano2019}. \textcolor{black}{However, these methods are not applicable when the dynamics are not Lipschitz continuous, let alone discontinuous. For example, dry friction gives a nonlinear discontinuous term in the dynamics \cite{vanderLinden1993}, and some finite-time consensus protocols also include nonlinear non-Lipschitz dynamics \cite{Shi2018}.
Moreover, existing methods measure the network state in constant intervals, resulting in an unnecessary power expenditure for networks with extremely fast dynamics, e.g. networks following a finite-time consensus protocol. It will also be wasteful for relatively-static networks, e.g., networks of autonomous vehicles, especially in platooning. There, one desires to keep the vehicles' position and velocity very close to a certain desired steady-state, either for safety or for efficiency.}

In this paper, we present a method for network identification in the graph identification framework while interacting with the agents in the network by injecting exogenous inputs. Our approach relies on the asymptotic stability of the network with respect to constant exogenous inputs, which can be verified using passivity theory, a well-known tool used to study MANs \cite{Burger2014,Sharf2018a,Sharf2019a}. Namely, \cite{Sharf2018a} shows a connection between the exogenous input of a diffusively-coupled MAN and its steady-state output.
We aim to use this connection, together with the global convergence properties of the network, to provide a network identification scheme for MANs. We do so by injecting appropriately-defined constant exogenous inputs and tracking the output of the agents. The key idea is that the steady-state outputs are one-to-one dependent on the exogenous input. This dependency can be linearized, and the associated matrix can be found by considering a finite number of inputs and outputs. Our contributions are stated as follows:
\begin{itemize}
\item[i)] We present a sub-cubic network identification algorithm for globally convergent networks. The identification is exact for noise-less LTI networks, and an error bound for the identified network is derived for the general case.
\item[ii)] We discuss the robustness of the algorithm, its assumptions, and compare it to other methods in the literature. 
\item[iii)] We explore the complexity theory behind network identification algorithms, and prove that the algorithm we presented is optimal in terms of time complexity.
\end{itemize}

The paper is organized as follows. Section \ref{sec.background} surveys relevant details about diffusively-coupled networks and their convergence. Section \ref{sec.ProblemFormulation} presents the problem formulation. Section \ref{sec.GeneralAlgorithm} presents the network identification algorithm. Section \ref{sec.Discussion} discusses the assumptions of the algorithm. Section \ref{sec.Complexity} studies network identification from a complexity theory standpoint, showing that the presented algorithm is optimal. Section \ref{sec.CaseStudy} demonstrates the algorithm in two case studies.

\paragraph*{Notation}
Time-dependent signals are denoted with italic letters, e.g. $y=y(t)$, and constant signals are denoted in an upright font, e.g. $\mathrm y$.
We use basic notions from algebraic graph theory. An undirected graph $\mathcal{G}=(\mathbb{V},\mathbb{E})$ consists of a finite set of vertices $\mathbb{V}$ and edges $\mathbb{E} \subset \mathbb{V} \times \mathbb{V}$. The edge with ends $i,j\in \V$ is denoted as $e=\{i,j\}$. For each edge $e$, we pick an arbitrary orientation and denote $e=(i,j)$.
The incidence matrix of $\mathcal{G}$, denoted \textcolor{black}{$\mathcal{E}_\mathcal{G}\in\mathbb{R}^{|\mathbb{V}|\times|\mathbb{E}|}$}, is defined such that for any edge $e=(i,j)\in \mathbb{E}$, $[\mathcal{E_\G}]_{ie} =+1$, $[\mathcal{E_\G}]_{je} =-1$, and $[\mathcal{E_\G}]_{\ell e} =0$ for $\ell \neq i,j$. The Laplacian of graph $\G$ is the matrix $\E_\G \E_\G^\top$, and a weighted Laplacian is the matrix $\E_\G D \E_\G^\top$ for a diagonal matrix $D>0$.
Furthermore, a weighted graph is a pair $\G_\nu = (\G,\{\nu_e\}_e)$ where $\G$ is a graph and $\{\nu_e\}$ are real numbers assigned to the edges. 

We also use basic notions from linear algebra. The standard basis vectors in $\mathbb{R}^n$ are denoted $\mathrm e_1,\ldots,\mathrm e_n$. The kernel of a linear map $T:X\to Y$ is denoted by $\ker{T}$. Furthermore, if $U$ is a subspace of an inner-product space $X$ (e.g., $\mathbb{R}^d$), we denote the orthogonal complement of $U$ by $U^{\perp}$. We write $A > 0$ ($A\ge0$) for a positive (semi-)definite matrix $A$. Moreover, if $A>0$, we denote the minimal eigenvalue of $A$ by $\underline{\sigma}(A)$. \textcolor{black}{We let $\mathbbm{1}_n \in \R^n$ be the all-ones vector}.
A matrix $M$ is called elementary if the matrix multiplication map $A \mapsto MA$ defines one of the following operations: switching two rows in $A$, scaling a row of $A$, or adding a row of $A$ to another.
We shall also employ basic notations from complexity theory. For two functions $f,g:\mathbb{N}\to\mathbb{R}$, if there exist constants $c,n_0>0$ such that $f(n) \le cg(n)$ holds for all integers $n\ge n_0$, we write $f = O(g)$. If the reverse inequality holds, we write $f = \Omega(g)$ instead.
Lastly, we say that a signal $a(t)$ is in $\mathcal{C}^q$ if it is continuously differentiable $q$ times.

%\section{Background: Network Optimization, MEIP Multi-Agent Systems and Complexity of Matrix Multiplication}\label{sec.background}
\vspace{-4pt}
\section{Preliminaries}
\label{sec.background}

In this section, we provide some needed background on diffusively-coupled networks and complexity theory.

\vspace{-6pt}
\subsection{Diffusively-Coupled Networks and Steady-States}
Consider a collection of agents interacting over a network $\mathcal{G}=(\mathbb{V},\mathbb{E})$.  Assign to each node $i\in \mathbb{V}$ (the agents) and each edge $e \in \mathbb{E}$ (the controllers) the following dynamical systems
\begin{align} \label{eq.MAN}
	\Sigma_i:
	\begin{cases}
		\dot{x}_i = f_i(x_i) + q_i(x_i)u_i \\
		y_i = h_i(x_i)
	\end{cases}, \,
	\textcolor{black}{\Pi_e: \mu_e = g_e(\zeta_e)}
	%\left\{\begin{array}{c}
	%	\dot{\eta}_e = \phi_e(\eta_e,\zeta_e) \\
	%	\mu_e = \psi_e(\eta_e,\zeta_e)
	%	\end{array} \right. .
\end{align}
We stack $y=[y_1^\top,\ldots,y_{|\mathbb{V}|}^\top]^\top$ and similarly for $u,\zeta$ and $\mu$ and the operators $\Sigma$ and $\Pi$ and the function $g$. The network is diffusively coupled by defining the control input as $u = -\E_\G\mu$, and the controller input as $\zeta = \E_\G^\top y$. The closed-loop network is denoted by $(\G,\Sigma,\Pi)$, and is illustrated in Fig. \ref{ClosedLoop}.

\begin{figure} [!t] 
    \centering
    \includegraphics[scale=0.28]{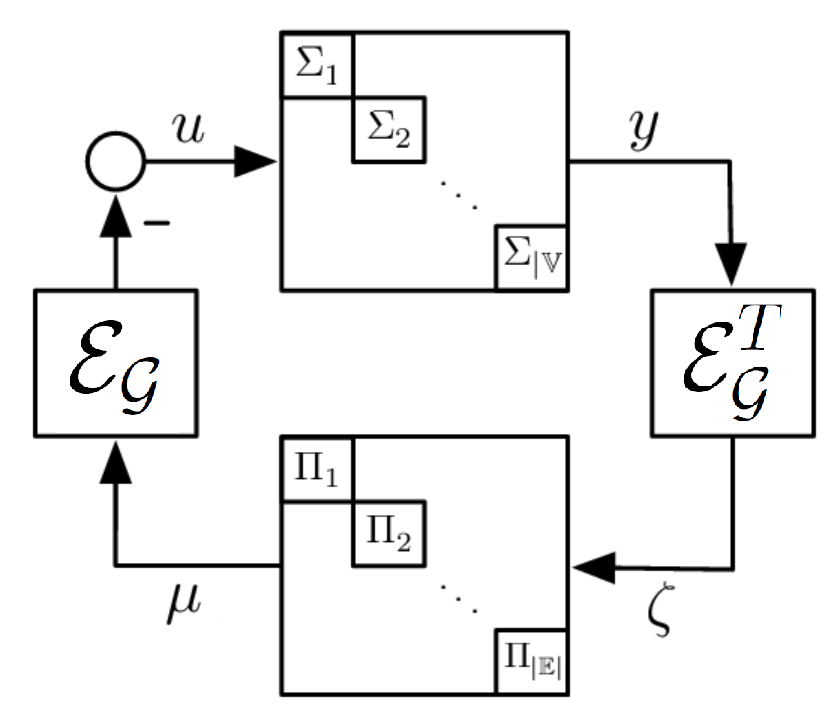}
    \vspace{-5pt}
    \caption{Block-diagram of the closed loop.}
    \label{ClosedLoop}
    \vspace{-10pt}
\end{figure}

We consider steady-states of such networks. If $(\mathrm{u},\mathrm{y},\upzeta,\upmu)$ is a steady-state of the network, then $(\mathrm{u}_i,\mathrm{y}_i)$ is a steady-state input-output pair of the $i$-th agent, and $(\upzeta_e,\upmu_e)$ is a steady-state input-output pair of the $e$-th edge. This motivates the following definition, originally introduced in \cite{Burger2014}.
\begin{defn}
The \emph{steady-state input-output relation} $k$ of a dynamical system is the collection of all constant steady-state input-output pairs of the system. Given a steady-state input $\mathrm{u}$ and a steady-state output $\mathrm{y}$, we define $k(\mathrm u) = \{\mathrm y:\ (\mathrm{u,y})\in k\}$ and $k^{-1}(\mathrm y) = \{\mathrm u:\ (\mathrm{u,y})\in k\}$.
\end{defn}
Let $k_i$ be the steady-state relations for the $i$-th agent and let $k$ be their stacked version. It is shown in \cite{Sharf2017} that $(\mathrm{u},\mathrm{y},\upzeta,\upmu)$ is a steady-state of the network if and only if $\mathrm y\in k(\mathrm u)$, $\upzeta=\E_\G^\top\mathrm y$, $\upmu\in g(\upzeta)$, and $\mathrm u=-\E_\G\upmu$ all hold. Consequently, $\mathrm{y}$ is a steady-state for the network $(\mathcal{G},\Sigma, \Pi)$ if and only if
\begin{align}\label{eq.SteadyStateEquation}
0\in k^{-1}(\mathrm y) + \E_\G g(\E_\G^\top\mathrm y).
\end{align}
% 
%This expression summarizes the constraints that must be satisfied by the network to achieve a steady-state solution.

\vspace{-10pt}
\subsection{Tools from Passivity and Complexity Theory}\label{subsec.passivity_complexity}
Our main tool will be a variant of \eqref{eq.SteadyStateEquation}, relating constant exogenous inputs and the corresponding steady-state outputs. This connection will only be useful if we can measure the steady-state of the network, i.e., we must assume the network converges, which is guaranteed under an \textcolor{black}{(output-strict)} passivity\footnote{\textcolor{black}{Recall that the system $\Sigma$ is output-strictly passive with respect to the steady-state input-output pair $(\mathrm{u},\mathrm y)$ if there exists a positive-definite storage function $S(x)$ and a constant $\rho > 0$ such that any trajectory satisfies $\frac{S(x(t))}{dt} \le (u - \mathrm u)(y - \mathrm y) - \rho(y - \mathrm y)^2$.}} assumption on the agents and controllers \cite{Burger2014}.

\begin{defn}[\small{Maximal Equilibrium Independent Passivity \cite{Burger2014}}]
Consider the dynamical system $\Upsilon$ defined by $\dot{x} = f(x,u),~
y = h(x,u)$,
with steady-state input-output relation $r$.
The system $\Upsilon$ is said to be \emph{(output-strictly) maximal equilibrium independent passive}, or MEIP, if the following conditions hold:
\begin{enumerate}
\item[i)] The system $\Upsilon$ is (output-strictly) passive with respect to any steady state pair $(\mathrm u,\mathrm y) \in r$.
\item[ii)] The relation $r$ is maximally monotone. That is, if $(\mathrm u,\mathrm y),(\mathrm u^\prime,\mathrm y^\prime)\in r$ then $(\mathrm u - \mathrm u^\prime)(\mathrm y - \mathrm y^\prime) \ge 0$, and $r$ is not contained in a larger monotone relation.
\end{enumerate}
\end{defn}
Such systems include single integrators, gradient systems, port-Hamiltonian systems, and others. The monotonicity requirement is used to prove existence of a closed-loop steady-state. See \cite{Burger2014} or \cite{Sharf2018a} for more details and examples.  

\begin{thm}[\hspace{-0.1pt}\cite{Burger2014,Sharf2018a}]\label{thm.ClosedLoopSteadyStates}
Consider the network $(\G,\Sigma,\Pi)$. Assume all agents are MEIP, and all controllers are output-strictly MEIP, or vice versa. Then the signals $u,y,\zeta,\mu$ converge to some steady-state values ${\mathrm u},{\mathrm y},{\mathrm \upzeta},{\mathrm \upmu}$ satisfying \eqref{eq.SteadyStateEquation}.
\end{thm}

If we measure the steady-state outputs and the corresponding constant exogenous inputs, we end up with a function inversion problem stemming from a variant of \eqref{eq.SteadyStateEquation}. Linearizing it results in a matrix inversion problem instead, whose computational complexity will now be discusses. Matrix inversion and matrix multiplication are known to have the same time complexity, which is usually denoted as $O(n^\omega)$, neglecting poly-logarithmic terms \cite{Cormen2009}. While schoolbook multiplication gives $\omega = 3$, other implementable algorithms yield $\omega \approx 2.807$ \cite{Strassen1969}, while some theoretical algorithms can give $\omega \approx 2.3728639$ \cite{LeGall2014}. The latter are impractical as the constant in front of $n^\omega$ is extremely large. We focus on inverting positive-definite matrices, with a time complexity denoted by $O(n^{\omega_1})$, similarly neglecting poly-logarithmic terms, where $2\le \omega_1 \le \omega < 3$, as reading the input takes $O(n^2)$ time.

\vspace{-9pt}
\section{Motivation and Problem Formulation}
\vspace{-5pt}
\label{sec.ProblemFormulation}
Our goal is to solve a network identification problem for MANs having the following (possibly nonlinear) dynamics:

\vspace{-7pt}
\small
\begin{align} \label{eq.System}
\dot{x}_i = f_i(x_i) + q_i(x_i)\left[\sum_{\{i,j\}\in\mathbb{E}} \nu_{ij}g_{ij}(h_j(x_j) - h_i(x_i)) + B_iw_i\right]
\end{align}\normalsize
where \textcolor{black}{the state is $x_i\in \R^{n_i}$, $f_i,q_i: \R^{n_i} \to \R^{n_i}$, $g_{ij}: \R\to \R$ and $h_i : \R^{n_i} \to \R$} are (not necessarily smooth) functions\footnote{The functions $g_{ij}$ are defined for all pairs, even those absent from the underlying graph. It is often assumed in MANs that each agent knows how to run a given protocol (i.e., consensus).}, and \textcolor{black}{ $B_i, w_i \in \R$}. These dynamics are achieved from the model \eqref{eq.MAN} by adding an exogenous input to the agents, now governed by the dynamics $\dot{x}_i = f_i(x_i) + q_i(x_i)(u_i + B_iw_i)$, as well as coupling strengths $\{\nu_{ij\}}$. Examples of MANs governed by \eqref{eq.System}, include the consensus protocol, the Kuramoto model for synchronizing oscillators \cite{Dorfler2014}, and traffic models \cite{Bando1995}.

In many models, only certain agents can be injected with an exogenous input (i.e., $B_i=0$ is possible), and only the output of certain agents can be measured. We assume all agents are both susceptible to exogenous inputs and measurable. The case in which only some agents are susceptible to exogenous inputs is tackled in Section \ref{sec.Discussion}. For now, we assume without loss of generality that $B_i = 1$, and denote \textcolor{black}{$N=\mathrm{diag}(\nu_{ij})_{\{i,j\} \in \EE} $}.
Denoting the network \eqref{eq.System} as $(\G_\nu,\Sigma,g)$, we make the following assumptions on the agents and controllers, allowing us to use the framework presented in Section \ref{sec.background}.  

\begin{assump} \label{Assumption1}
The closed-loop network $(\G_\nu,\Sigma,\Pi)$ converges to a constant steady-state for any constant exogenous input $\mathrm w$ and any initial condition.
\end{assump}
\begin{assump} \label{Assumption2}
The inverse of the steady-state input-output relation for each agent, $k_i^{-1}(\mathrm y_i)$, is a continuous monotone function of $\mathrm y_i$. Moreover, $g_{ij}(\zeta_{ij})$ is a continuous monotone function of $\zeta_{ij}$. Assume that on any bounded interval, there are at most finitely many points at which $k_i^{-1}$ and $g_{ij}$ are not twice differentiable. Moreover, assume that the set of points in $\R$ on which the derivative $\frac{dg_{ij}}{d\zeta_{ij}}$ nulls is of measure zero.
\end{assump}
Assumption \ref{Assumption1} is satisfied if the weights $\nu_{ij}$ are non-negative, if the systems $\{\Sigma_i\}_{i\in \V}$ are MEIP, the controllers $\{\Pi_e\}_{e\in \E}$ are output-strictly MEIP (or vice versa). However, this assumption is weaker, as convergence can be established using methods other than passivity. Note that the last part of Assumption \ref{Assumption2} holds if and only if the derivative of $g_{ij}$ is positive almost everywhere. We will note that if the agents are linear time invariant (LTI) then $k^{-1}_i$ is a linear function.
We now formulate the network identification problem considered in this paper.

\begin{prob} \label{prob.network_detection2}
Consider the network $(\G_\nu, \Sigma, g)$ of the form \eqref{eq.System} satisfying Assumptions \ref{Assumption1} and \ref{Assumption2}. Assume the steady-state input-output relations for the agents and controllers are known\footnote{\textcolor{black}{As seen later in Remark 3, we actually only need to know the derivative of $g_{ij}$ at a single point.}}, but the network structure $\G$ and coupling coefficients $\{\nu_{ij}\}$ are unknown.  Design the control inputs $w_i$ such that, together with measurements of the output of the network, it is possible to identify the graph $\G$ and the coupling coefficients $\{\nu_{ij}\}$. 
\end{prob}

The paper \cite{Sharf2018b} proposed a solution for Problem \ref{prob.network_detection2} under an MEIP assumption by feeding the network with a single \emph{constant} input and relying on a variant of \eqref{eq.SteadyStateEquation}. While the time complexity for LTI networks was $O(n^3)$, the runtime for general nonlinear networks could be superexponential, namely $O(2^{2^n})$. We wish to improve on this complexity by considering multiple constant exogenous inputs denoted as $\textcolor{black}{\mathrm w^{(i)}}$. As in \cite{Sharf2018b}, we write an equation connecting the steady-state output $\mathrm y$ to the constant exogenous input $\mathrm w$.

\begin{prop} \label{prop.ClosedLoopSteadyState}
Suppose that the network $(\G_\nu,\Sigma,g)$ is run with the constant exogenous input $\mathrm w$, and let $k$ be the steady-state relation for the agents. Then the output $y$ of the network converges to a steady-state $\mathrm y$ satisfying the following equation,
\begin{align}\label{eq.ClosedLoopEquation}
\mathrm{w} = k^{-1}(\mathrm y) + \E_\G N g(\E_\G^\top\mathrm{y}).
\end{align}
\end{prop}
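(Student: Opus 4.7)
The plan is to adapt the unforced steady-state equation \eqref{eq.SteadyStateEquation} to accommodate the constant exogenous input $\mathrm w$, and then invoke Assumption \ref{Assumption1} to pass from a purely algebraic characterization to an actual limit statement. Since Assumption \ref{Assumption1} already guarantees that the signals $y,u,\zeta,\mu$ converge to constant values $\mathrm y,\mathrm u,\upzeta,\upmu$, the only thing left to show is that these limits jointly satisfy \eqref{eq.ClosedLoopEquation}.

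First I would record what each block in Fig.~\ref{ClosedLoop} imposes at steady state. For the agents, the dynamics \eqref{eq.System} with $B_i=1$ can be viewed as $\Sigma_i$ driven by the effective input $u_i+\mathrm w_i$, so letting $t\to\infty$ yields the relation $\mathrm u_i+\mathrm w_i\in k_i^{-1}(\mathrm y_i)$, which by Assumption \ref{Assumption2} is actually an equality since $k_i^{-1}$ is single-valued. For the controllers, which are the static maps $\zeta_e\mapsto \nu_e g_e(\zeta_e)$, the steady-state relation is simply $\gamma_e(\upzeta_e)=\nu_e g_e(\upzeta_e)$, i.e.\ $\upmu=Ng(\upzeta)$.

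Next I would feed these into the diffusive coupling equations $\upzeta=\E_\G^T\mathrm y$ and $\mathrm u=-\E_\G\upmu$. Substituting the controller relation gives $\mathrm u=-\E_\G N g(\E_\G^T\mathrm y)$, and plugging this into the agent relation $\mathrm u+\mathrm w=k^{-1}(\mathrm y)$ yields
\[
\mathrm w = k^{-1}(\mathrm y)+\E_\G N g(\E_\G^T\mathrm y),
\]
which is precisely \eqref{eq.ClosedLoopEquation}. This is the same derivation that produced \eqref{eq.SteadyStateEquation} in the unforced setting; the sole modification is that $\mathrm w$ enters additively through the agent's input port and therefore replaces the $0$ on the left-hand side.

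There is no real obstacle here: the proposition is a bookkeeping exercise combining Assumption \ref{Assumption1} (for the existence of the limit) with the block-diagram algebra of Fig.~\ref{ClosedLoop} (for the form of the equation). The one point worth double-checking is that the relations can be treated as functions rather than set-valued maps, which is exactly what the monotonicity/continuity portion of Assumption \ref{Assumption2} provides for $k^{-1}$ and $g$; this is what allows the final statement to be written as an equality in \eqref{eq.ClosedLoopEquation}.
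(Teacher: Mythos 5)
Your proposal is correct and follows essentially the same route as the paper's proof: invoke Assumption \ref{Assumption1} for convergence, then combine $k^{-1}(\mathrm y)=\mathrm u+\mathrm w$, $\upmu=Ng(\upzeta)$, $\upzeta=\E_\G^T\mathrm y$, and $\mathrm u=-\E_\G\upmu$, noting that Assumption \ref{Assumption2} turns the inclusion into an equality. Your write-up is simply a more detailed spelling-out of the same substitution argument.
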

\begin{proof}
The closed-loop network converges by Assumption \ref{Assumption1}. The equation for the steady-state output $\mathrm y$ of the closed-loop network follows from $\upmu = Ng(\mathrm \upzeta$), $k^{-1}(\mathrm y)=\mathrm u+\mathrm w$, $\upzeta=\E_\G^\top\mathrm y$, and $\mathrm u=-\E_\G\upmu$. This is an equality rather than an inclusion due to Assumption \ref{Assumption2}.
\end{proof}

Equation \eqref{eq.ClosedLoopEquation} shows that the steady-state output $\mathrm y$ depends not only on the constant exogenous input $\mathrm w$, but also on the incidence matrix $\E_\G$ and the weights $N = \mathrm{diag}(\nu_{ij})$. We wish to use this connection to reconstruct $\E_\G$ and $N$ by running the network with exogenous inputs \textcolor{black}{$\mathrm{w^{(1)},\ldots,w^{(n)}}$} and measuring the corresponding steady-state outputs \textcolor{black}{$\mathrm{y^{(1)},\ldots,y^{(n)}}$}.

\section{A Network Identification Algorithm \\For Convergent Networks} \label{sec.GeneralAlgorithm}
This section presents an algorithm solving Problem \ref{prob.network_detection2}. We consider equation \eqref{eq.ClosedLoopEquation}, relating a constant exogenous input $\mathrm w$ and the resulting steady-state output $\mathrm y$. If the functions $k^{-1},g$ were linear, the relation \eqref{eq.ClosedLoopEquation} is also linear, meaning that $\E_\G N\E_\G^\top$ can be found by taking $n$ steady-state input-output pairs $(\textcolor{black}{\mathrm w^{(i)}},\textcolor{black}{\mathrm y^{(i)}})$, assuming the inputs \textcolor{black}{
$\mathrm w^{(1)},\ldots,\mathrm w^{(n)}$} are linearly independent. Alas, the functions $k^{-1},g$ need not be linear.  However, we can still apply the same idea by \emph{linearizing} equation \eqref{eq.ClosedLoopEquation}. For now, we assume the steady-state inputs and outputs can be measured. We explain how to deal with this assumption at the end of this section.

We first run the network with an arbitrary constant exogenous input \textcolor{black}{$\mathrm w^{(0)}$}, get the corresponding steady-state output \textcolor{black}{$\mathrm y^{(0)}$}, and linearize \eqref{eq.ClosedLoopEquation} around \textcolor{black}{$\mathrm y^{(0)}$}. Thus, for a constant exogenous input $\mathrm w = \textcolor{black}{\mathrm w^{(0)}} + \delta \mathrm w$, we obtain
\textcolor{black}{
\begin{align} \label{eq.Nonlinear}
&\mathrm w - k^{-1}(\mathrm y) = \E_\G N g(\E_\G^\top \mathrm y)\approx \\ &\E_\G N g(\E_\G^\top \mathrm y^{(0)}) + \E_\G N \nabla g(\E_\G^\top \mathrm y^{(0)}) \E_\G^\top \delta \mathrm y, \nonumber
\end{align}}
where $\mathrm y$ is the steady-state output of the network, and $\delta \mathrm y = \mathrm{y - \textcolor{black}{y^{(0)}}}$. More precisely, we have the following result.

\begin{prop} \label{prop.QuadraticOutputError}
Suppose that the functions $k^{-1},g$ are twice differentiable at $\textcolor{black}{\mathrm{y}^{(0)}}$ and $\E_\G^\top\textcolor{black}{\mathrm{y}^{(0)}}$ respectively. For any $\delta \mathrm w$ small enough, and $\mathrm y = \textcolor{black}{\mathrm y^{(0)}} + \delta \mathrm y$, the following equation holds: %, where $\mathrm y = \mathrm y_0 + \delta \mathrm y$.
\begin{align} \label{eq.9}
\left[\nabla k^{-1}(\textcolor{black}{\mathrm y^{(0)}}) + \E_\G N \nabla g(\E_\G^\top \textcolor{black}{\mathrm y^{(0)}}) \E_\G^\top\right] \delta \mathrm y  = \delta \mathrm w + O(\|\delta \mathrm y\|^2).
\end{align} 
\end{prop}

\begin{proof}
By subtracting $\textcolor{black}{\mathrm w^{(0)}} = k^{-1}(\textcolor{black}{\mathrm y^{(0)}}) + \E_\G N g(\E_\G^\top \textcolor{black}{\mathrm y^{(0)}})$ from $\mathrm w= k^{-1}(\mathrm y) + \E_\G N g(\E_\G^\top \mathrm y)$, we get:
\begin{align*}
\delta \mathrm w = k^{-1}(\mathrm y) - k^{-1}(\textcolor{black}{\mathrm y^{(0)}}) + \E_\G N \left(g(\E_\G^\top\mathrm y) - g(\E_\G^\top \textcolor{black}{\mathrm y^{(0)}})\right).
\end{align*}
The result now follows by using $g$'s and $k^{-1}$'s Taylor expansions near $\E_\G^\top \textcolor{black}{\mathrm y^{(0)}}$ and $\textcolor{black}{\mathrm y^{(0)}}$ up to first order, where the error term is quadratic as $g$ and $k^{-1}$ are twice differentiable.
\end{proof}

We denote $\mathcal{M}=\nabla k^{-1}(\textcolor{black}{\mathrm y^{(0)}}) + \E_\G N \nabla g(\E^\top_\G \textcolor{black}{\mathrm y^{(0)}}) \E_\G^\top$. Proposition \ref{prop.QuadraticOutputError} suggests a way to estimate the matrix $N$ and the graph $\G$. Indeed, we inject $n$ constant exogenous inputs and measure the corresponding steady-state outputs, yielding \textcolor{black}{$\delta\mathrm w^{(1)},\delta\mathrm y^{(1)},\ldots,\delta\mathrm w^{(n)},\delta \mathrm y^{(n)}$} as in the proposition. \eqref{eq.9} implies that $\mathcal M \delta \textcolor{black}{\mathrm y^{(i)}} \approx \delta \textcolor{black}{\mathrm w^{(i)}}$ for $i=1,2,\ldots,n$. If \textcolor{black}{$\delta \mathrm y^{(1)},\ldots,\delta\mathrm y^{(n)}$} are linearly independent, then we have $\mathcal M \approx \delta W\delta Y^{-1}$ where \textcolor{black}{$\delta W = \left[\begin{smallmatrix} \delta \mathrm w^{(1)} & \cdots & \delta \mathrm w^{(n)}\end{smallmatrix}\right]$} and \textcolor{black}{$\delta Y = \left[\begin{smallmatrix} \delta \mathrm y^{(1)} & \cdots & \delta \mathrm y^{(n)}\end{smallmatrix}\right]$}. In turn, we can estimate $N$ and the graph $\G$ by looking at the off-diagonal entries of $\delta W\delta Y^{-1}$. Thus, we design \textcolor{black}{$\mathrm w^{(0)},\mathrm w^{(1)},\ldots,\mathrm w^{(n)}$} in a way assuring that \textcolor{black}{$\delta \mathrm y^{(1)},\ldots \delta \mathrm y^{(n)}$} are linearly independent. We also note that $\mathcal{M}\mathbbm{1}_n = \nabla k^{-1}(\textcolor{black}{\mathrm y^{(0)})} \mathbbm{1}_n$ is independent of the unknown matrix $N$ and the graph $\G$, so we can choose one $\delta \textcolor{black}{\mathrm y^{(i)}}$ as $\mathbbm{1}_n$, which will be useful in one specific case. 

Namely, if $k^{-1}$ is a constant function, solutions to \eqref{eq.ClosedLoopEquation} are unique up to a multiple of $\mathbbm{1}_n$, namely, different steady-states might be achieved from different initial conditions. For example, consider the consensus protocol, concerning MANs with single integrators agents and controllers given by $g_{ij}(\zeta_{ij}) = \zeta_{ij}$. In this case, $k^{-1}(\mathrm y) = 0$ for any $\mathrm y$, and the agents converge to consensus whose value depends on initial conditions. For that reason, if $k^{-1}$ is constant near \textcolor{black}{$\mathrm y^{(0)}$}, we project \textcolor{black}{$\delta \mathrm y^{(1)},\ldots,\delta \mathrm y^{(n)}$} to $\mathbbm{1}_n^\perp$, the space orthogonal to $\mathbbm{1}_n$. 

\begin{thm} \label{thm.GeneralIndependence}
Let $\mathbb{P}_{n,0}$ be an absolutely continuous probability measure on $\mathbb{R}^n$, and take \textcolor{black}{$\mathrm w^{(0)}$} as a sample of $\mathbb{P}_{n,0}$. Let \textcolor{black}{$\mathrm y^{(0)}$} be a corresponding steady-state output, i.e. a solution to \eqref{eq.ClosedLoopEquation} with input $\mathrm w^{(0)}$. For some $\kappa \in \R$, define \textcolor{black}{$\{\delta \mathrm y^{(i)}\}_{i=1}^n$} as follows:
\begin{itemize}
\item Suppose that $k^{-1}$ is differentiable at \textcolor{black}{$\mathrm y^{(0)}$} and that $\nabla k^{-1}(\textcolor{black}{\mathrm y^{(0)}}) = 0$. For any $i=1,\ldots,n-1$, choose $\delta \textcolor{black}{\mathrm w^{(i)}}= \kappa (\mathrm e_i - \mathrm e_n)$, take \textcolor{black}{$\mathrm{y^{(i)}}$} as the steady-state output corresponding to \textcolor{black}{$\mathrm w^{(i)}$}, and define \textcolor{black}{$\delta \mathrm y^{(i)} = (\mathrm{Id}_n - \frac{1}{n}\mathbbm{1}_n\mathbbm{1}_n^\top)(\mathrm y^{(i)} - \mathrm y^{(0)})$}. Also, set $\delta\textcolor{black}{\mathrm y^{(n)}} = \kappa\mathbbm{1}_n$.
\item Otherwise, choose $\delta\textcolor{black}{\mathrm w^{(i)}} = \kappa \mathrm e_i$ for $i=1,\ldots,n$. Define \textcolor{black}{$\mathrm{y^{(i)}}$} as the steady-state output corresponding to $\textcolor{black}{\mathrm w^{(i)}}$ and \textcolor{black}{$\delta \mathrm y^{(i)} = \mathrm y^{(i)} - \mathrm y^{(0)}$}. 
\end{itemize}
Suppose Assumptions \ref{Assumption1} and \ref{Assumption2} hold. If $\kappa$ is small enough, then the set $\mathcal{A}= \textcolor{black}{\{\delta \mathrm y^{(1)},\ldots,\delta\mathrm y^{(n-1)},\delta \mathrm y^{(n)}\}}$ is a basis for $\mathbb{R}^n$.
\end{thm}

The proof of Theorem \ref{thm.GeneralIndependence} can be found in Appendix \ref{append.ProofIndependence}. The resulting algorithm, as described in the paragraphs proceeding Theorem \ref{thm.GeneralIndependence}, is summarized as Algorithm \ref{alg.General}, with minor changes. 
Namely, For reasons explained later (see Remark \ref{rem.Omega1Implementation2}), it would be advantageous if \textcolor{black}{$\delta \mathrm w^{(1)},\ldots\delta \mathrm w^{(n)}$} are also linearly linearly independent. This is problematic if $\nabla k^{-1}(\textcolor{black}{\mathrm y^{(0)}}) = 0$, as we take that $\delta \textcolor{black}{\mathrm y^{(n)}} = \kappa\mathbbm{1}_n$, and compute $\mathrm w^{(n)}$ as $\mathrm w^{(n)} = \mathcal{M}\delta y^{(n)} = \kappa \nabla k^{-1}(\textcolor{black}{\mathrm y^{(0)}}) \mathbbm{1}_n = 0$. Instead, we declare $\delta \textcolor{black}{\mathrm w^{(n)}} = \kappa\mathbbm{1}_n$, so that $\delta W\delta Y^{-1} \approx \mathcal M + \frac{1}{n}\mathbbm{1}_n\mathbbm{1}_n^\top$, yielding an estimate $\mathcal M$ as $\delta W\delta Y^{-1} - \frac{1}{n}\mathbbm{1}_n\mathbbm{1}_n^\top$.

\begin{algorithm} 
\caption{Network Identification for Convergent Networks}
{\bf Input:} A networked system on $n$ agents with an exogenous input $w(t)$ and a measurable output $y(t)$.

{\bf Output:} An estimate of the underlying graph, $\textcolor{black}{\hat{\mathcal{G}}}$, and an estimate of the edge weights, $\{\textcolor{black}{\hat{\nu}_{ij}}\}$. 
\begin{algorithmic}[1] \label{alg.General}
\STATE{Randomly choose \textcolor{black}{$\mathrm w^{(0)}$} as a standard Gaussian vector. Change the value of the exogenous input $w(t)$ to \textcolor{black}{$\mathrm w^{(0)}$}}.
\STATE{Wait for the diffusively coupled network to converge. Measure its steady-state output and denote it as $\textcolor{black}{\mathrm y^{(0)}}$.}
\STATE Choose $0 < \kappa \ll 1$.
\IF{$\nabla k^{-1}(\textcolor{black}{\mathrm y^{(0)}})=0$}
\STATE Define $\delta \textcolor{black}{\mathrm w^{(i)}} = \kappa(\mathrm e_i-\mathrm e_n)$ for $i=1,\ldots,n-1$.
\STATE Put $\delta\textcolor{black}{\mathrm y^{(n)}} = \kappa \mathbbm{1}_n$ and $\delta \textcolor{black}{\mathrm w^{(n)}} = \kappa\mathbbm{1}_n$.
\STATE Put $\text{NumRuns} = n-1$,  $Q = \frac{1}{n}\mathbbm{1}_n\mathbbm{1}_n^\top$, and $J = \mathrm{Id}_n - Q$.
\ELSE
\STATE Define $\delta \textcolor{black}{\mathrm w^{(i)}} = \kappa \mathrm e_i$ for $i=1,\ldots,n$.
\STATE Put $\text{NumRuns} = n$, $Q = 0$, and $J = \mathrm{Id}_n$.
\ENDIF
\FOR{$i=1$ to $\text{NumRuns}$}
\STATE Change the value of $w(t)$ to \textcolor{black}{$\mathrm w^{(0)} + \delta\mathrm w^{(i)}$}.
\STATE Wait for the diffusively coupled network to converge. Measure its steady-state output and denote it as $\textcolor{black}{\mathrm y^{(i)}}$.
\STATE Define \textcolor{black}{$\delta \mathrm y^{(i)} = J(\mathrm y^{(i)} - \mathrm y^{(0)})$}.
\ENDFOR
\STATE Define \textcolor{black}{$\delta W = \left[\begin{smallmatrix} \delta \mathrm w^{(1)} & \ldots & \delta \mathrm w^{(n)}\end{smallmatrix}\right]$} and \textcolor{black}{$\delta Y = \left[\begin{smallmatrix} \delta \mathrm y^{(1)} & \ldots & \delta \mathrm y^{(n)}\end{smallmatrix}\right]$}.
\STATE Compute $M^\prime = \delta W \delta Y^{-1}$ and $M = M^\prime - Q$..
\STATE Define an empty graph \textcolor{black}{$\hat{\mathcal{G}}$} on $n$ nodes. 
\FOR{$i=1$ to $n$ and $j=1$ to $n$}
\IF{$|M_{i,j}| > \varepsilon$ and $i\neq j$}
\STATE Add the edge $\{i,j\}$ to the graph $\textcolor{black}{\hat{\mathcal{G}}}$.
\STATE Define $d_{ij} = \frac{dg_{ij}}{d\zeta_{ij}}((\textcolor{black}{\mathrm y^{(0)})_i - (\mathrm y^{(0)})_j})$; $\textcolor{black}{\hat{\nu}_{ij}} = -\frac{M_{ij}}{d_{ij}}$.
\ENDIF
\ENDFOR
\STATE {\bf Return} the graph \textcolor{black}{$\hat{\mathcal{G}}$} and the coupling coefficients $\{\textcolor{black}{\hat{\nu}_{ij}}\}$.
\end{algorithmic}
\end{algorithm}

Algorithm \ref{alg.General} gives an identification scheme - choose a collection of linearly independent vectors, run the MAN using them as inputs, measure the steady-state outputs, and use the result to compute estimates of the graph $\hat{\G}$ and the weights $\hat{\nu}_{ij}$.
Instead of running the MAN multiple times, we can apply a switching signal and use the global convergence of the MAN, i.e., we inject an exogenous input $w(t)$ whose value changes when the MAN reaches a steady-state, or $\epsilon$-close to it. 

It is clear that unless all agents and controllers \textcolor{black}{are} LTI, Algorithm 1 is an approximation algorithm, as the quadratic error term in \eqref{eq.Nonlinear} affects the output. We now bound the error of algorithm and determine its time complexity.

\begin{thm} \label{thm.GeneralErrorRateAndComplexity}
Consider a network $(\G_\nu,\Sigma,g)$ satisfying assumptions \ref{Assumption1} and \ref{Assumption2}.
\begin{enumerate}
\item Let $M$ be the matrix calculated by Algorithm \ref{alg.General}. Then for any $i,j\in\V$, we have $$|M_{ij}-\mathcal M_{ij}| \le O\left(\sqrt{n}\kappa\left(1 + \max_{i,j}(\nu_{ij}d_{ij})\lambda_{\text{max}}(\G)\right)\right),$$ with probability 1. Thus, Algorithm \ref{alg.General} approximates the graph and coupling coefficients, with probability $1$. 
\item \textcolor{black}{Algorithm \ref{alg.General} performs $O(n^{\omega})$ floating point operations.}
\end{enumerate}
\end{thm}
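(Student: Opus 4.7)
\emph{Overall plan for the error bound (part 1).} The starting point is equation~\eqref{eq.12}, which by the analysis leading to Theorem~\ref{thm.GeneralIndependence} reads $\mathcal M\,\delta\mathrm y_i = \delta\mathrm w_i + \epsilon_i$ with $\|\epsilon_i\| = O(\kappa^2)$, almost surely by Corollary~\ref{cor.GeneralPoint} and Remark~\ref{rem.QuadraticInputError}. Stacking the $n$ columns gives the matrix identity $\mathcal M_\star\,\delta Y = \delta W + E$, where $\mathcal M_\star = \mathcal M$ in the nonsingular case and $\mathcal M_\star = \mathcal M + Q$ in the singular case; in the latter, the designed column $\delta\mathrm y_n = \kappa\mathbbm 1_n$ contributes no error since $\mathcal M_\star \mathbbm 1_n = \mathbbm 1_n = \delta\mathrm w_n$ exactly. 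Writing $\delta W = \kappa F$ for a fixed invertible $F$ of bounded condition number (either $I$ or the $(\mathrm e_i-\mathrm e_n,\mathbbm 1_n)$-columned matrix), I would solve for $\delta Y^{-1}$, substitute into $M' = \delta W\,\delta Y^{-1}$, and apply a first-order Neumann expansion of $(I + \kappa^{-1}F^{-1}E)^{-1}$ to obtain $M' - \mathcal M_\star = -\kappa^{-1}F^{-1}E\,\mathcal M_\star + O(\kappa^2)$.

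\emph{Extracting the claimed entrywise bound.} Since the $Q$-correction cancels inside $\mathcal M_\star$, we have $M-\mathcal M = M'-\mathcal M_\star$. Using $|A_{ij}|\le \|A\|_F$, together with $\|E\|_F^2 = \sum_i\|\epsilon_i\|^2 = O(n\kappa^4)$ and the $O(1)$ bound on $\|F^{-1}\|_{\mathrm{op}}$, the submultiplicativity $\|AB\|_F \le \|A\|_F\|B\|_{\mathrm{op}}$ yields
\begin{equation*}
|M_{ij}-\mathcal M_{ij}| \;\le\; \|\kappa^{-1}F^{-1}E\|_F\,\|\mathcal M_\star\|_{\mathrm{op}} + O(\kappa^2) \;=\; O\!\left(\sqrt n\,\kappa\,\|\mathcal M_\star\|_{\mathrm{op}}\right).
\end{equation*}
The triangle inequality $\|\mathcal M_\star\|_{\mathrm{op}} \le \|\nabla k^{-1}(\mathrm y_0)\|_{\mathrm{op}} + \max_{i,j}(\nu_{ij}d_{ij})\lambda_{\max}(\G) + O(1)$ (with the $O(1)$ accounting for the rank-one $Q$ in the singular case) delivers the stated inequality, and convergence of the reconstruction as $\kappa\to 0$. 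Almost-sure validity rests on Corollary~\ref{cor.GeneralPoint}, which provides the twice-differentiability of $k^{-1}$ and $g$ at $\mathrm y_0$ and $\E_\G^T\mathrm y_0$ and the positive-definiteness of $\nabla g$ required for the expansion.

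\emph{Complexity (part 2).} I would account for Algorithm~\ref{alg.General} line-by-line. The probing and assembly of $\delta Y,\delta W$ from at most $n$ constant-input experiments cost $O(n^2)$ arithmetic operations (physical runtime per probe is treated as $O(1)$ by Remark~\ref{rem.LTIConvergence}). The dominant step is the computation of $M' = \delta W\,\delta Y^{-1}$: one matrix inversion and one matrix multiplication, each $O(n^\omega)$. The subtraction $M = M'-Q$ and the final double loop reading off the edges and weights are both $O(n^2)$. Summing gives the claimed $O(n^\omega)$ total.

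\emph{Main obstacle.} The principal subtlety is the singular case $\nabla k^{-1}(\mathrm y_0) = 0$: $\mathcal M$ itself is not invertible, and the Neumann expansion is justified only after regularization by the rank-one $Q$. One must verify that (i) the fixed column $\delta\mathrm y_n = \kappa\mathbbm 1_n$ integrates cleanly into the algebraic identity without generating a quadratic error term, and (ii) the condition number of $F$ does not inject hidden $n$-dependence beyond the explicit $\sqrt n$ coming from $\|E\|_F$. Both points reduce to routine linear algebra once one decomposes $\mathbb R^n = \langle\mathbbm 1_n\rangle \oplus \mathbbm 1_n^\perp$ exactly as in the proof of Theorem~\ref{thm.GeneralIndependence}.
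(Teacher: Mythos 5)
Your proposal is correct and follows essentially the same route as the paper's proof: the per-column quadratic error from Proposition~\ref{prop.QuadraticOutputError}, Remark~\ref{rem.QuadraticInputError} and Corollary~\ref{cor.GeneralPoint}, the structure of $\delta W=\kappa F$ with $\|F^{-1}\|=O(1)$ (Lemma~\ref{lem.InvW}), submultiplicativity, and the triangle-inequality bound on $\|\mathcal M\|$, with your first-order Neumann expansion of $(I+\kappa^{-1}F^{-1}E)^{-1}$ being just a repackaging of the paper's step $\delta W^{-1}\delta V=\mathrm{Id}_n-O(\kappa)$, hence $\delta V^{-1}\delta W=\mathrm{Id}_n+O(\kappa)$; the complexity accounting is identical. (Minor, inconsequential slips: $F$ has $\|F\|=\Theta(\sqrt n)$ so its condition number is not bounded, but only $\|F^{-1}\|$ is actually used, and the leading correction should read $-\kappa^{-1}EF^{-1}\mathcal M_\star$ rather than $-\kappa^{-1}F^{-1}E\mathcal M_\star$, which does not affect the norm bound.)
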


The proof of Theorem \ref{thm.GeneralErrorRateAndComplexity} can be found in Appendix \ref{append.ProofAlgorithm}. Its main idea is to bound the operator norm of $\|M - \mathcal M\|$ by $\|(M-\mathcal M)\delta Y\| \|\delta Y^{-1}\|$. The first part is estimated using \eqref{eq.9}, and the latter is estimated by using the matrices $\mathcal M$ and $\delta W$. 

\begin{rem}
If the agents and controllers are LTI, equation \eqref{eq.ClosedLoopEquation} is already linear. Thus, Algorithm \ref{alg.General} is error-less.
\end{rem}

\begin{rem}
Steps 17-18 of Algorithm \ref{alg.General} essentially solve the least-squares problem \textcolor{black}{$\sum_{i=1}^N \|M^\prime \delta \mathrm y^{(i)} - \delta \mathrm w^{(i)}\|^2$}. If, for some reason, we decide to take more than $n$ measurements, we could reformulate steps 17-18 with a least-squares problem.
\end{rem}

\begin{rem}
\textcolor{black}{
Algorithm \ref{alg.General} does not require full knowledge of the steady-state relations, but only if $\nabla k^{-1}(\mathrm y^{(0)}) = 0$, as well as the derivative of $g_{ij}$ at a single point. For the former, note that $\nabla k^{-1}$ is a diagonal matrix whose entries are $\nabla k_i^{-1}$, so $\nabla k^{-1}(\mathrm y^{(0)}) = 0$ if and only if $\nabla k^{-1}(\mathrm y^{(0)})\mathbbm{1}_n = 0$, which can be tested by injecting the constant exogenous input $\mathrm w = \mathrm w^{(0)} + \kappa \mathbbm{1}_n$.}
\end{rem}

\begin{rem} \label{rem.Omega1Implementation2}
We can reduce the complexity of computing $M^\prime$ from $O(n^\omega)$ to $O(n^{\omega_1})$, reducing the number of floating point operations to $O(n^{\omega_1})$. Indeed, we have $(M^\prime)^{-1} = \delta W^{-1} \delta Y$. As $\delta W$ is the product of $O(n)$ elementary matrices (see Lemma \ref{lem.InvW}), we can compute $(M^\prime)^{-1}$ in $O(n^2)$ time by applying the corresponding row operations on $\delta Y$. However, the matrix $(M^\prime)^{-1}$ need not be positive-definite, or even symmetric, due to the error term in \eqref{eq.12}. We thus consider the symmetric matrix $(M^\prime)^{-1}_{\text{sym}} = \frac{1}{2}((M^\prime)^{-1}+((M^\prime)^{-1})^\top)$. This matrix is close to the inverse of $\mathcal{M} + Q > 0$, hence, the eigenvalues of $(M^\prime)_{\text{sym}}^{-1}$ are positive and it is therefore a positive-definite matrix, so inverting it costs only $O(n^{\omega_1})$ time. 
\end{rem}

Theorem \ref{thm.GeneralErrorRateAndComplexity} gives an error bound on the elements of the matrix $M$, but we want a more explicit error estimate on the weighted graph $\textcolor{black}{\hat{\mathcal{G}}_{\hat \nu}}$ computed by the algorithm. We now relate the estimate on $|M_{ij} - \mathcal M_{ij}|$, the estimates on the weights $\{\nu_{ij}\}$, and the estimate on the graph $\G$.
 
\begin{prop} \label{prop.GraphRigidityLemma}
Suppose the same assumptions as in Theorem \ref{thm.GeneralErrorRateAndComplexity} hold. Suppose further that for any $i,j \in \V$, $|M_{ij} - \mathcal M_{ij}| \le m$, and that $m\le  \frac{1}{4}\min_{i,j} (\nu_{ij}d_{ij})$ . If the number $\varepsilon$ used in Algorithm \ref{alg.General} is equal to $2m$, then the graph $\textcolor{black}{\hat{\mathcal{G}}}$ computed by the algorithm satisfies $\textcolor{black}{\hat{\mathcal{G}}} = \G$. Moreover, for all $\{i,j\}\in\EE$, the inequality $|\textcolor{black}{\hat{\nu}_{ij}} - \nu_{ij}| \le  d_{ij}^{-1}m$ holds.
\end{prop}

\begin{proof}
Suppose first that $\{i,j\}\not\in\EE$. Then $\mathcal M_{ij} = 0$, meaning that $|M_{ij}| \le m < \varepsilon$. Thus $\{i,j\}$ is not in $\textcolor{black}{\hat{\mathcal{G}}}$, as required. Conversely, assume that $\{i,j\}\in\EE$. Then $\mathcal M_{ij} = -d_{ij}\nu_{ij}$, and $|M_{ij}| \ge d_{ij}\nu_{ij} - m \ge d_{ij}\nu_{ij} - \frac{1}{4}d_{ij}\nu_{ij}  > \varepsilon$. Thus, $\{i,j\}\in \textcolor{black}{\hat{\mathcal{G}}}$. Moreover, $M_{ij} = -\textcolor{black}{\hat{\nu}_{ij}}d_{ij}$ and $\mathcal M_{ij} = -\nu_{ij}d_{ij}$, together with $|M_{ij} - \mathcal M_{ij}| \le m$, imply that $|\textcolor{black}{\hat{\nu}_{ij}} - \nu_{ij}| \le  d_{ij}^{-1}m$.
\end{proof}

Theorem \ref{thm.GeneralErrorRateAndComplexity} and Proposition \ref{prop.GraphRigidityLemma} show that Algorithm \ref{alg.General} approximates the underlying weighted graph. \textcolor{black}{Moreover, they show that it performs $O(n^{\omega})$ floating point operations, where Remark \ref{rem.Omega1Implementation2} shows it can be reduced to $O(n^{\omega_1})$. %\footnote{Recall from Section \ref{subsec.passivity_complexity} that $2\leq \omega_1\leq\omega<3 $.}
However, we do not consider the time it takes the network to converge, corresponding to steps 2 and 13-14 in Algorithm 1. Indeed, }as steady-states are only achieved asymptotically, we must use a finite-time approximation, \textcolor{black}{ideally after no more than $O(n^{\omega -1})$ time, so that Algorithm \ref{alg.General} will have a runtime of $O(n^\omega)$. As the value of $\omega \ge 2$ isn't actually known, we use the more conservative bound of $O(n)$ instead.}

\textcolor{black}{Section \ref{sec.Discussion} below, which discusses the robustness of Algorithm \ref{alg.General}, also gives an estimate on its error as a function of measurement inaccuracies. Namely, Proposition \ref{prop.MeasError} below shows that we should measure the output when its distance from the true steady-state is bounded by $O(n^{-0.5})$. In order to get this close to the steady-state in $O(n)$ time, we need the network to have a convergence rate of no more than $O(t^{-0.5})$. This is certainly achieved by LTI networks satisfying Assumption \ref{Assumption1}, which have an exponential rate of convergence. This is also achieved for networks with output-strictly MEIP agents and MEIP controllers, as the ``convergence profile" method described in \cite[Section VI]{Sharf2019f} shows that under minor technical assumptions, the network converges no slower than the solution of the ODE $\dot{s} = -Cs^{\beta}$ where $0<\beta<2$ depends on the observation function of the agents and $C>0$ depends on passivity coefficients. The case $\beta < 1$ gives finite-time convergence, $\beta = 1$ gives exponential convergence, and $1<\beta<2$ gives polynomial convergence at a rate of $O(t^{-1/(\beta-1)}) < O(t^{-0.5})$. In other cases, however, the desired convergence rate can only be achieved by strengthening Assumption 1 and demanding it explicitly, i.e. that the network has a convergence rate no slower than $O(t^{-0.5})$. In all three cases, the algorithm has a total runtime of $O(n^\omega)$. }

As for determining when to declare that the network converged to a steady-state, there are many ways do so. For networks satisfying the same MEIP-based condition as above, the ``convergence profile" method described in \cite[Section VI]{Sharf2019f} can be used to determine a stopping time, namely when the solution of $\dot{s} = -Cs^{\beta}$ has become small enough. Another solution is to stop running the network when $\dot y$ (or $\dot x$) is small enough. Other ways to determine the stopping time include computer-based simulations, or even intuition based on the physical constants affecting the agents' dynamics.

\vspace{-5pt}
\section{Robustness and Practical Considerations}\label{sec.Discussion}
Algorithm \ref{alg.General} solves Problem \ref{prob.network_detection2}, but does so under some strong assumptions. First, it assumes the dynamics are noiseless and disturbance-free, allowing it to converge to a constant steady-state. Second, it assumes the measurements taken are perfect and are not subject to disturbances. Lastly, it assumes the exogenous input can be applied to all agents. This section is dedicated to discuss these points, and to give a briefly compare the algorithm to other methods described in literature.

\vspace{-7pt}
\subsection{Robustness to Noise and Disturbances}
We begin by studying how noise and disturbances affect the output of the diffusively-coupled network. Generally, if we make no passivity assumption on the network, then it might not converge in the presence of noise. One example of this phenomenon is the consensus protocol \cite{OlfatiSaber2007}, for which white noise does not disturb the asymptotic convergence to consensus (almost surely), but it does cause the consensus value to be volatile. The consensus protocol can also be viewed as the MAN with single-integrator agents and static gain controllers, meaning it has passive agents and output-strictly passive controllers. However, passivity can still be used for some form of noise- and disturbance-rejection:
\begin{prop} \label{prop.PassivityNoise}
Consider a diffusively-coupled network $(\G_\nu,\Sigma,g)$ with steady-state $(\mathrm{u,y},\upzeta,\upmu)$. Suppose that the agents are output-strictly passive with respect to $(\mathrm{u}_i,\mathrm y_i)$ with parameters $\rho_i > 0$, and that the controllers are passive with respect to $(\upzeta_e,\upmu_e)$. Let $S$ be the sum of the agents' storage functions, and denote $R = \mathrm{diag}(\rho_i) > 0$. Let $0<\Delta \in \R$.

Consider a parasitic exogenous input $d(t)$ to the agents, so the input is $u(t) = d(t) - \E_\G \mu(t)$, and assume that at any time, $||R^{-1/2}d(t)|| \le \Delta$ holds. For any $\varsigma > 0$, define $\mathcal{A}_\varsigma = \{x:\ \|R^{1/2}(h(x)-\mathrm y)\| \le \varsigma\}$, and let $\Xi_\varsigma = \max_{x\in \mathcal{A}_\varsigma} S(x)$. Then for any $\epsilon > 0$ and any initial condition, there exists some $T$ such that if $t>T$, then $\|y(t)-\mathrm y\|\le \max_{x:\ S(x)\le \Xi_{\Delta+\epsilon}}{\|h(x)-\mathrm y\|}$. 
\end{prop}
\begin{proof}
Let $v(t)=-\E_\G \mu(t)$. By \textcolor{black}{definition of (output-strict)} passivity, we have that $0 \le \sum_{e\in\EE} (\zeta_e - \upzeta_e)(\mu_e - \upmu_e)$, and
\begin{align*}
\frac{d}{dt} S(x)\le& \sum_{i\in \V}\left(-\rho_i \|y_i - \mathrm y_i\|^2 + (d_i+v_i-\mathrm{u}_i)(y_i - \mathrm y_i)\right),
\end{align*}
as $v(t) = u(t)-d(t)$. Summing the equations and using $v = -\E_\G \mu$, $\zeta = \E_\G^\top y$, $\mathrm u = -\E_\G \upmu$ and $\upzeta = \E_\G^\top \mathrm y$ yields,
\begin{align} \label{eq.PassivityNoiseIneq}
\frac{d}{dt} S(x) &= -(y-\mathrm y)^\top R(y-\mathrm y) + d(t)^\top (y-\mathrm y) \nonumber \\
&= -||R^{1/2}(y-\mathrm y)||^2 + (R^{-1/2}d(t))^\top R^{1/2}(y-\mathrm y) \nonumber\\
&\leq -||R^{1/2}(y-\mathrm y)||^2 + \Delta\|R^{1/2}(y-\mathrm y)\|.
\end{align}
We note that if $\|R^{1/2}(y(t)-\mathrm y)\| > \Delta+\epsilon$, then the right-hand side is bounded from above by $-(\epsilon+\Delta)\epsilon < 0$. If this happens indefinitely, we eventually reach $S(x) < 0$, which is absurd. Thus, for some $T>0$, we have $\|R^{1/2}(y(T)-\mathrm y)\| \le \Delta+\epsilon$, hence $x(T)\in \mathcal{A}_{\Delta+\epsilon}$ and $S(x(T))\le \Xi_{\Delta+\epsilon}$. 

We now claim that $S(x(t)) \le \Xi_{\Delta+\epsilon}$ for all $t>T$. If not, we find a time $t_1$ in which $S(x(t_1)) > \Xi_{\Delta+\epsilon}$.
By the same argument, we find some time $T^{\prime} > t_1$ such that $S(x(T^{\prime})) \le \Xi_{\Delta + \epsilon}$. The function $S(x(t))$ is continuous for times $T\le t \le T^{\prime}$, so it attains a maximum at some time $t_2 > T$. As $S(x(t_1))$ is larger than $S(x(T)),S(x(T^\prime))$, the point $t_2$ must be in the interior of the interval between $T$ and $T^\prime$, which implies that $\frac{d}{dt}S(x(t)) = 0$ at $t=t_2$. However, $S(x(t_2)) \ge S(x(t_1)) >  \Xi_{\Delta+\epsilon}$, so by definition of $\Xi_{\Delta+\epsilon}$, we must have $x(t_2)\not\in \mathcal{A}_{\Delta+\epsilon}$, i.e. $\|R^{1/2}(y(t_2)-\mathrm y)\| > \Delta+\epsilon$. In turn, this implies that $\frac{d}{dt}S(x(t)) < 0$ at $t=t_2$ by \eqref{eq.PassivityNoiseIneq}. This is a contradicition, as we saw that $\frac{d}{dt}S(x(t)) = 0$ at $t=t_2$. Thus $S(x(t)) \le \Xi_{\Delta+\epsilon}$ for all $t\ge T$. This completes the proof.
\end{proof}
The proposition above shows that even in the presence of disturbances or noise, the algorithm can sample the output $y(t)$ not too far from the true, disturbance-free steady-state output $\mathrm y$. This result will intertwine with Proposition \ref{prop.MeasError}, in which the effects of measurement errors will be accounted for.
\begin{rem}
Proposition \ref{prop.PassivityNoise} does not distinguish between disturbances and random noise. In practice, the bound is of the right order of magnitude for disturbances, but is a gross overestimate for noise. For example, consider a agents $\dot{x} = -x_i+u_i, y_i=x_i$ with weights $\nu_{ij} = 0$, where $d(t)$ is chosen as random white noise, bounded by $C$, and with variance $\upsigma^2$. Proposition \ref{prop.PassivityNoise} shows the agents converges to an output with $|y(t)|\le C$ (as $\rho = 1$). However, writing $x(t)$ as a convolution integral and applying It\^{o} calculus \cite{Oksendal2013} shows that $\EE(x(t)) = 0$ and that $\mathrm{Var}(x(t)) \le \upsigma^2/2$. Chebyshev's inequality now gives a high-probability bound on where $x(t)$ can be, which is much better than Proposition \ref{prop.PassivityNoise}'s bound if $\upsigma \ll C$. 
\end{rem}

We now shift our focus to measurement errors, which add parasitic terms when defining the matrices $\delta W$, $\delta Y$ in Algorithm \ref{alg.General}. We prove the following:
\begin{prop}\label{prop.MeasError}
Suppose Algorithm \ref{alg.General} builds the matrix $\delta Y + \Delta Y$ instead of $\delta Y$, due to measurement error. If $||\Delta Y|| \ll ||\delta Y||$, then the algorithm calculates a matrix $M$ with $\|M-\mathcal M\| \le O\left(\sqrt{n}\left(1+\max_{i,j}(\nu_{ij}d_{ij})\lambda_{\mathrm{max}}(\G)\right)\right)\|\Delta Y \delta Y^{-1}\|$, plus the error term from Theorem \ref{thm.GeneralErrorRateAndComplexity}.
\end{prop}
The proof of the proposition can be found in Appendix \ref{append.ProofAlgorithm}.
\begin{rem}
Proposition \ref{prop.MeasError} gives bounds the error if a bound on the relative measurement error is known. In some cases, we have a bound on the absolute measurement error, e.g. Proposition \ref{prop.PassivityNoise}. In that case, we have $\|\Delta Y \delta Y^{-1}\| \le \|\delta Y^{-1}\| \|\Delta Y\|$, where $\|\delta Y^{-1}\|$ can be bounded as in the proof of Theorem \ref{thm.GeneralErrorRateAndComplexity}, i.e. $\|\delta Y^{-1}\| \le O(\kappa^{-1}(1+\max_{i,j}(\nu_{ij}d_{ij})\lambda_{\mathrm{max}}(\G)))$.
\end{rem}

\vspace{-15pt}
\subsection{Probing Inputs Supported on Subsets of Nodes}
The previous subsection shows that the algorithm is somewhat resistant to noise, either in the dynamics or the measurement.
We now move to the last major assumption, namely that the exogenous input can be applied to all agents. There are two possible ways to relax the assumption. First, we can apply compressed sensing methods, using the sparsity of $M^\prime$, which corresponds to a a sparse graph $\G$, similarly to \cite{Gardner2003,Guangjun2015}. See \cite{Donoho2006} for more on compressed sensing.
Another approach is to still try and use $n$ different measurements, with exogenous inputs supported only on $\ell$ nodes. Indeed, in order to reconstruct the matrix $M^{\prime}$, the vectors \textcolor{black}{$\delta \mathrm y^{(1)},\ldots,\delta \mathrm y^{(n)}$} must span $\R^n$. If the steady-state equation \eqref{eq.ClosedLoopEquation} is inherently nonlinear, then even when the inputs are restricted to a subspace of dimension $\ell$, the outputs can span all of $\mathbb{R}^n$. Abstractly, we prove the following:
\begin{prop} \label{prop.RankMap}
Let $F:\R^d \to \R^n$ be any function which is $(\ell+1)$-times differentiable at \textcolor{black}{$\mathrm w^{(0)} \in \R^d$}. Suppose the dimension of the subspace spanned by all partial derivatives of $F$ at \textcolor{black}{$\mathrm w^{(0)}$} up to order $\ell$ is $r$, and denote the number of these partial derivatives by $s$. Let $\PP$ be any absolutely continuous probability measure on $\R^d$ which is supported on a small ball around \textcolor{black}{$\mathrm w^{(0)}$}, and let \textcolor{black}{$\mathrm w^{(1)},\ldots,\mathrm w^{(s)}$} be i.i.d. samples of it. Then, with probability 1, the span of the vectors \textcolor{black}{$F(\mathrm w^{(1)})-F(\mathrm w^{(0)}),\ldots, F(\mathrm w^{(s)})-F(\mathrm w^{(0)})$} has dimension $r$. In particular, if $r=n$ then they span all of $\R^n$.   
\end{prop}
Before proving the theorem, we emphasize that the harsher smoothness assumptions are made on the steady-state relation $k^{-1}$ and the interaction $g$, and that the agents' dynamics might still be non-smooth. We also emphasize that these stricter assumptions are only used for the proof, and Algorithm \ref{alg.General} would still only use the first derivative of $k^{-1}$ and $g$. %We now prove the theorem:
\begin{proof}
Suppose that $\PP$ is supported inside a ball around $\textcolor{black}{\mathrm{w}^{(0)}}$ of radius $\kappa \ll 1$. By Taylor's theorem, we can write $\Phi_F = D_F W$ up to an error of order $O(\kappa^{\ell+1})$, where $\Phi_F$ is the matrix whose columns are \textcolor{black}{$F(\mathrm w^{(i)})-F(\mathrm w^{(0)})$} , $D_F$ is the matrix whose columns are all the partial derivatives of $F$ at \textcolor{black}{$\mathrm w^{(0)}$} up to order $\ell$, and $W$ is a square matrix consisting of monomials of the entries of \textcolor{black}{$\mathrm w^{(i)} - \mathrm w^{(0)}$} for $i\in\{1,\ldots,s\}$. We assumed that ${\rm rank}~D_F = r$, and we aim to prove that ${\rm rank}~\Phi_F = r$. This immediately follows if $W$ is invertible, which we now prove.
Indeed, consider the map $p:(\R^d)^s\to\R^n$ defined by $p(\textcolor{black}{\mathrm w^{(1)},\ldots,\mathrm w^{(s)}}) = \det W$. This is a non-zero polynomial in \textcolor{black}{$\mathrm w^{(1)},\ldots,\mathrm w^{(s)}$}, and $W$ is invertible if and only if $p \neq 0$. However, the collection of zeros of a non-zero polynomial is a zero-measure set \cite{Caron2005}, and thus $p\neq 0$ with probability $1$, so $W$ is invertible with probability 1. This concludes the proof.
\end{proof}
Proposition \ref{prop.RankMap} can be applied to the map $F$ mapping $\mathrm w$ to $\mathrm y$ according to \eqref{eq.ClosedLoopEquation}. In some occasions, it can be hard to compute the rank $r$, but one can use the proposition in a more data-driven fashion - take $s$ random samples \textcolor{black}{$\mathrm w^{(0)} + \delta \mathrm w^{(i)}$} near \textcolor{black}{$\mathrm w^{(0)}$}, and compute the $s$ corresponding steady-state outputs \textcolor{black}{$\mathrm y^{(0)} + \delta \mathrm y^{(i)}$}. The rank $r$ is computed using $\delta \textcolor{black}{\mathrm y^{(i)}}$, and one can find the connecting matrix $M^{\prime}$ using compressed sensing. 

To conclude this section, we saw that the presented algorithm can be applied in real-world scenarios, in which noise and measuring errors exist, and not all nodes are susceptible to controlled exogenous inputs. Other algorithms which use probing inputs, or similar methods, rely on linearizing the dynamics instead of the steady-state equation, using higher-order terms in the Taylor approximation, or assuming that the dynamics are Lipschitz continuous \cite{Chen2009,Gardner2003,Guangjun2015,Prasse2020,Burbano2019}. Such methods cannot be applied if the dynamics are non-smooth, or even discontinuous, e.g., when dry friction is introduced \cite{vanderLinden1993}, or for some finite-time consensus protocols \cite{Shi2018}. Moreover, such methods are usually applied by measuring the network at regularly scheduled intervals, which yields nearly identical measurements if the dynamics are either very slow or that convergence happens extremely quickly, thus wasting power on unnecessary sensing and communication. On the contrary, these methods work well for networks with relatively rich dynamics, where useful measurements are gathered regularly.

In comparison, the presented algorithm can still be applied in the case of non-smooth or even discontinuous dynamics, so long that the steady-state relations are differentiable. It will not function as well in dynamics-rich networks, does not waste unnecessary power for networks with slow dynamics, and will perform superbly for networks which converge fast. Examples of such networks include networks of autonomous vehicles trying to coordinate their velocity for platooning. The network cannot have rich dynamics due to safety constraints, and perturbations from the desired platooning velocity should be very small. Understanding these networks is key for traffic management, and can form a first step in predicting traffic jams and accidents. Other applications with similar conditions include multi-satellite arrays, UAVs, drones, and robots. 

\section{Time Complexity Bounds for the Network Identification Problem} \label{sec.Complexity}
In the previous sections we presented an algorithm solving the network identification problem in $O(n^{\omega_1})$ time using specially designed inputs. We ask ourselves if we can improve on that. We first need to discretize our problem in order to fit it into the standard complexity theory framework. The presented algorithm, Algorithm \ref{alg.General}, only measured the output of the network, but we consider more general algorithms which can also measure the derivatives of the output.
\begin{prob}\label{prob.DiscreteNetworkReconst}
We are given a diffusively coupled network $(\G_\nu,\Sigma,g)$ for known agents $\Sigma$ and static controllers $g$. We are also given an integer $q>0$, such that if the input to the network is a $\mathcal{C}^{q+1}$ signal, then the output is a $\mathcal{C}^q$ signal\footnote{This is weaker than assuming that the functions $f,q,g,h$ are all smooth.}. Find the weighted graph $\G_\nu$ using measurements of the node outputs $y(t)$ and their derivatives $\frac{d^k}{dt^k}y(t)$ up to order $q$. The exogenous input signal $w(t)$ can be chosen as any $\mathcal{C}^{q+1}$ signal. Furthermore, accessing the measurements $y(t)$ or changing the function describing $w(t)$ can not be performed faster than at $\Delta t$-second intervals. Moreover, the measured outputs $y(t)$ are accurate up to a relative error no larger than $\e$.
\end{prob}

After discretizing the problem, limiting the rate of measurement and change in the input, we prove the following theorem.
\begin{thm}\label{thm.Complexity}
Any (possibly randomized) algorithm solving Problem \ref{prob.DiscreteNetworkReconst}, estimating $\{\nu_{ij}\}$ with some finite error (with probability $1$), must make $n-1$ measurements in the worst case. Moreover, if the algorithm is deterministic, its worst-case complexity is at least $\Omega(n^{\omega_1})$.
\end{thm}

\begin{cor}
By Remark \ref{rem.Omega1Implementation2}, Algorithm \ref{alg.General} is optimal in terms of computational time complexity
\end{cor}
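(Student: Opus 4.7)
The proof splits naturally into two independent claims, which I address in sequence: the measurement lower bound via an adversarial indistinguishability argument, and the runtime lower bound via a reduction from positive-definite matrix inversion.

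For the measurement lower bound, I would restrict to the worst-case subclass of LTI networks with pure-integrator agents ($k^{-1}\equiv 0$) and unit-gain linear controllers. There, Proposition \ref{prop.ClosedLoopSteadyState} reduces to $\mathrm w = L\mathrm y$ with $L=\E_\G N\E_\G^T$ a positively weighted Laplacian and $\ker L=\mathrm{span}(\mathbbm{1}_n)$, so admissible inputs satisfy $\mathrm w\perp \mathbbm{1}_n$ and each output may be normalized to $\mathrm y_i\in\mathbbm{1}_n^\perp$. After $k$ experiments the algorithm has only learned the action of $L$ on $V=\mathrm{span}(\mathrm y_1,\dots,\mathrm y_k)$. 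If $k\le n-2$, then $\dim V<n-1=\dim\mathbbm{1}_n^\perp$; the affine set of positively weighted Laplacians matching the observed action on $V$ has positive dimension, and a small enough perturbation of a base Laplacian along a kernel direction keeps every edge weight strictly positive and produces two distinct weighted graphs with identical measurement traces. No algorithm---deterministic or randomized with probability-$1$ success---can distinguish these, establishing $k\ge n-1$.

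For the $\Omega(n^{\omega_1})$ bound, the plan is to reduce positive-definite matrix inversion to network reconstruction. Let $\mathcal P$ denote the class of PD matrices of the form $A+\E_\G N \E_\G^T$ with $A$ a PD diagonal and $\E_\G N \E_\G^T$ a nonnegatively weighted Laplacian; every $\mathcal M\in\mathcal P$ is realizable as the closed-loop steady-state matrix of an LTI network satisfying Assumptions \ref{Assumption1} and \ref{Assumption2}. To invert a given $\mathcal M\in\mathcal P$, the reduction simulates a virtual network whose true steady-state matrix is $\mathcal M^{-1}$ and feeds it to the reconstruction algorithm; each algorithmic query $\delta\mathrm w$ is answered with $\delta\mathrm y=\mathcal M\,\delta\mathrm w$. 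For the canonical probes $\delta\mathrm w_i=\kappa\mathrm e_i$ of Algorithm \ref{alg.General} this amounts to reading columns of $\kappa\mathcal M$ in $O(n)$ time each, so the total simulation cost is $O(n^2)$. The algorithm returns $\mathcal M^{-1}$ in time $T(n)$, yielding a PD inversion procedure of cost $T(n)+O(n^2)$, which by the definition of $\omega_1$ forces $T(n)=\Omega(n^{\omega_1})$.

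The main obstacle lies in Part 2. First, $\mathcal P$ is only a proper subclass of PD matrices (essentially the inverses of symmetric diagonally dominant M-matrices), so one must argue that inverting this subclass also requires $\Omega(n^{\omega_1})$ time. I would resolve this by appealing to the operational definition of $\omega_1$ used in the paper: no currently known PD-inversion algorithm beats the exponent $\omega_1$ when restricted to $\mathcal P$, so the reduction is meaningful for the paper's complexity accounting. Second, the $O(n^2)$ simulation bound relies on the algorithm using canonical basis probes; for an arbitrary adversarial query pattern the simulator can instead precompute the matrix product $\mathcal M\,\delta W$ once in batched form at cost $O(n^\omega)$, which is absorbed into the target bound under the widely believed equality $\omega_1=\omega$, so the reduction still delivers the stated lower bound on the essential work performed by the algorithm itself.
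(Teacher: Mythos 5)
The corollary itself requires no new argument in the paper: it is just the conjunction of the $O(n^{\omega_1})$ upper bound from Remark \ref{rem.Omega1Implementation2} with the $\Omega(n^{\omega_1})$ lower bound of Theorem \ref{thm.Complexity}, so what you are actually attempting is a proof of Theorem \ref{thm.Complexity}, and there your argument has two genuine gaps.

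First, your measurement lower bound silently restricts the algorithm to steady-state probing: you assume each experiment reveals only a pair $(\mathrm w,\mathrm y)$ with $\mathrm w=L\mathrm y$, so that $k$ experiments teach the algorithm the action of $L$ on a $k$-dimensional subspace. But Problem \ref{prob.DiscreteNetworkReconst} allows the algorithm to inject an arbitrary $\mathcal{C}^{q+1}$ signal and to read $y(t)$ \emph{together with its derivatives up to order $q$}. On your chosen hard subclass (integrator agents, unit-gain static controllers, so $\dot x=-Lx+w$), setting $w\equiv 0$ over a measurement window gives $y^{(j)}(t)=(-L)^j y(t)$; for a generic (cyclic) state $y(t)$ and $q\ge n$ this Krylov data determines $L$ completely from a \emph{single} measurement, so your indistinguishability argument fails against algorithms that exploit transients and derivatives. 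Neutralizing exactly this is what most of the paper's proof does: it rescales the dynamics by a free parameter $\varrho$ and shows, via the convolution-integral and Taylor-remainder estimates, that for $\varrho$ large enough every quantity readable within the $\Delta t$ and $\e$ constraints collapses to $\frac{1}{\varrho}P^{-1}$ applied to a derivative of the \emph{known} input, so each measurement yields only one useful matrix--vector product. Without some version of that step, the claim that $k\le n-2$ experiments leave a free direction is not established for the problem as actually posed.

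Second, in the $\Omega(n^{\omega_1})$ part you correctly identify the two obstacles to your reduction but do not overcome them: an appeal to ``no currently known PD-inversion algorithm beats $\omega_1$ on the subclass $\mathcal P$'' and to the ``widely believed'' equality $\omega_1=\omega$ is not a proof, and with a non-canonical query pattern your $O(n^\omega)$ simulation cost swallows the very bound you are trying to derive. The paper removes both difficulties with its lemma: taking $\G=K_n$, projecting onto $\mathbbm{1}_n^\perp$ by an isometry $V$, and showing that $V\E_{K_n}N\E_{K_n}^TV^T$ ranges over \emph{all} $(n-1)\times(n-1)$ positive-definite matrices as $N$ varies. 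The algorithm then effectively receives $P^{-1}$ applied to $n-1$ vectors of its own choosing --- reading these is part of the measurement phase, not of the computation being lower-bounded --- and must output $P$, i.e., it solves unrestricted positive-definite inversion, giving $\Omega((n-1)^{\omega_1})=\Omega(n^{\omega_1})$ with no residual subclass or simulation-cost caveats. You would need to import that surjectivity lemma (or an equivalent statement) to make your reduction sound.
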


The proof of the theorem relies on two lemmas:
\begin{lem} \label{lem.Numerics}
Let $P\in \mathbb{R}^{m\times m}$ be a positive definite matrix, let $\varrho > 0$, and let $w(t)$ be a $ \mathcal{C}^{q+1}$ signal. Consider the system $\Sigma_P : \dot{x} = -\varrho P x + w, y=x$. If $\varrho \gg \frac{1}{\underline{\sigma}(P)\Delta t}$ then for any $0\le j \le q$ and any time $T\ge \Delta t$, the equality $\frac{d^j y}{dt^j}(T) =  \frac{1}{\varrho} P^{-1} \frac{d^j w}{dt^j}(T)$ holds up to a relative error no larger than $\e$.
\end{lem}
\begin{lem} \label{lem.Structure}
Let $P\in \mathbb{R}^{(n-1)\times(n-1)}$ be a positive definite matrix, let $\E_{K_n}$ be the incidence matrix of the complete graph on $n$ edges, and let $V\in \mathbb{R}^{(n-1)\times n}$ be any matrix such that $VV^\top = \Id_{n-1}$ and $V\mathbbm{1}_n = 0$. There exists a positive semi-definite matrix $\mathcal Q\in \mathbb{R}^{n\times n}$ and a positive-definite diagonal matrix $N$ such that $\mathcal{Q} = \E_{K_n} N \E_{K_n}^\top$ and $P = V\mathcal Q V^\top$.
\end{lem}
The proof of Lemma \ref{lem.Numerics} is very technical and is relegated to Appendix \ref{append.Numerics}. We now prove Lemma \ref{lem.Structure}.
\begin{proof}
Define $\mathcal Q = V^\top PV\in\mathbb{R}^n$, which is positive semi-definite as $P$ is positive definite. Moreover, we have:
$$
V\mathcal Q V^\top = VV^\top P VV^\top = \Id_{n-1} P \Id_{n-1} = P
$$
which proves the second claim. As for the first, define the matrix $N$ as follows - for each edge $e=\{i,j\}$ in $K_n$, define the $e$-th diagonal entry of $N$ as $-\mathcal Q_{ij}=-\mathcal Q_{ji}$. Note that the off-diagonal entries of $\mathcal Q$ are equal to the off-diagonal entries of $\E_{K_n}N\E_{K_n}^\top$, as the latter is a weighted Laplacian. As for the diagonal entries, $\mathbbm{1}_n$ is in the kernel of both $\E_{K_n}N\E_{K_n}^\top$ and $\mathcal Q = V^\top PV$. Thus the sum of the elements in each row of both matrices is zero, meaning that:
$$
\mathcal Q_{ii} = -\sum_{j\ne i} \mathcal Q_{ij},\ (\E_{K_n}N\E_{K_n}^\top)_{ii} = -\sum_{j\ne i} (\E_{K_n}N\E_{K_n}^\top)_{ij}.
$$ \normalsize
Therefore the diagonal entries are also equal. This implies that $\mathcal Q = \E_{K_n}N\E_{K_n}^\top$ and completes the proof of the lemma.
\end{proof}

We now prove Theorem \ref{thm.Complexity}.
\begin{proof}
We first deal with a similar problem. We consider a single agent with $m$ inputs and $m$ outputs, evolving according to the equation $\dot{x} = -f(x) + w,\ y = h(x)$. We are again allowed to measure the output and its derivatives up to order $q$, or change the $\mathcal{C}^{q+1}$ function defining the input, no faster than once every $\Delta t$ seconds. Moreover, all measurements are accurate up to a relative error of $\e$.  Specifically, we choose any positive-definite matrix $P\in \mathbb{R}^{m\times m}$ and an arbitrary large enough scalar $\varrho>0$, and consider the single agent $\Sigma_P$ with $m$ inputs and outputs, as defined in Lemma \ref{lem.Numerics}. We claim any algorithm computing $P$ with some finite error (with probability $1$) must take at least $m$ measurements in the worst case, and that if the algorithm is deterministic, then its worst case complexity is at least $\Omega(m^{\omega_1})$. We will prove it below, but first show this claim proves the theorem. Consider a network identification problem with agents $\dot{x}_i = u_i$, static controllers $g_{ij}(x) = x$, and an underlying graph $\G = K_n$., where the coupling matrix $N$ is unknown. The dynamics of the network can be written as $\dot{x} = -\E_{K_n}N\E_{K_n}^\top x+ w$.
We note that this system has two decoupled subsystems - one for the scalar $\mathbbm{1}_n^\top x$, and one for the relative states vector $\mathrm{Proj}_{\mathbbm{1}_n^\perp}x$. Focusing on the latter, we consider the matrix $V\in\mathbb{R}^{(n-1)\times n}$ having the following vectors as rows for $k=1,\ldots,n-1$:
$$
\textcolor{black}{v^{(k)}} = \frac{1}{\sqrt{k^2+k}} [\underset{\text{$k$ times}}{\underbrace{1,\cdots,1}},-k,\underset{\small\text{$n-k-1$ times}\normalsize}{\underbrace{0,\cdots,0}}]
$$
It is easy to check that $V^\top V = \mathrm{Id}_{n-1}$ and that $\mathbbm{1_n}\in \ker(V)$. The vector $z = Vx$ satisfies the ODE $\dot{z} = -V\E_KN\E_K^\top V^\top z + Vw$. By Lemma \ref{lem.Structure}, we get a general system of the form $\Sigma_P$, where $P$ can be any positive definite matrix, and reconstructing $P = V\E_KN\E_K^\top V^\top$ is equivalent to reconstructing $N$. This completes the proof of the theorem, as here $m = n-1$.

Now, return to the system identification problem for the system $\Sigma_P$. Consider any measurement made by the algorithm. 
Suppose that at time $T_1$ we measured the $\ell$-th derivative of the output, and let $T_0$ be the last time the function describing the input $w(t)$ was changed. Noting that $T_1 - T_0 \ge \Delta t$, we conclude that if $\varrho$ is large enough then the measurement is equal to $\frac{1}{\varrho} P^{-1} \frac{d^\ell w}{dt^\ell}(T_1)$ up to a relative error of size $\e$. In particular, if $\varrho$ is sufficiently large, all measurements made by the algorithm will be of the form \textcolor{black}{$\mathrm z^{(i)} = \frac{1}{\varrho}P^{-1}\uptau^{(i)}$}, where the vector \textcolor{black}{$\uptau^{(i)}$} depends on $w(t)$ and can be calculated exactly.  We first assume that less than $m$ measurement were made. Let \textcolor{black}{$\mathrm z^{(1)},\ldots,\mathrm z^{(r)}$} be the measurements and \textcolor{black}{$\uptau^{(1)},\ldots,\uptau^{(r)}$} be the corresponding functions of the inputs. If $r<m$, we can find a nonzero vector $\uptau_\star$ which is orthogonal to \textcolor{black}{$\uptau^{(1)},\ldots,\uptau^{(r)}$}. This means that the systems $\Sigma_P$ and $\Sigma_{\mathcal P_\alpha}$ where $\mathcal P_\alpha=(P^{-1}+\alpha \uptau_\star \uptau_\star^\top)^{-1}$ will yield the same measurements, and we cannot differentiate between them. Moreover, the error can be arbitrarily large for different values of $\alpha$. Thus any (possibly randomized) algorithm solving the problem, estimating the system matrix $\varrho P$ up to some finite error with probability 1, should take at least $m$ measurements. Now, the relationship between $\frac{1}{\varrho}\textcolor{black}{\uptau^{(i)}}$ and $\textcolor{black}{\mathrm z^{(i)}}$ is linear, with the connecting matrix being $P^{-1}$, so taking more than $m$ measurements does not yield any additional data. Hence, the algorithm takes measurements of $P^{-1}$ times $m$ vectors, and returns the value of $P$. Thus, it solves the matrix inversion problem for positive-definite matrices, and thus has complexity of at least $\Omega(m^{\omega_1})$.
\end{proof}

\vspace{-10pt}
\section{Numerical Examples} \label{sec.CaseStudy}
We now apply Algorithm \ref{alg.General} in two examples. The first considers a network of oscillators subject to dry friction, and the second considers an opinion dynamics model in which the agents converge to consensus in finite-time. \textcolor{black}{In both cases, the steady-state relations are smooth, but the dynamics are discontinuous. In particular, Algorithm 1 is applicable although other works in the literature concerned with nonlinear agents, including \cite{Gardner2003,Guangjun2015,Prasse2020,Chen2009,Burbano2019}, are inapplicable.}
\begin{figure*}[!t]
\begin{center}
	\subfigure[Adjacency Matrix of the Graph in the First Case Study. Yellow entries are equal to 1, blue entries are equal to 0.] {\scalebox{.46}{\includegraphics{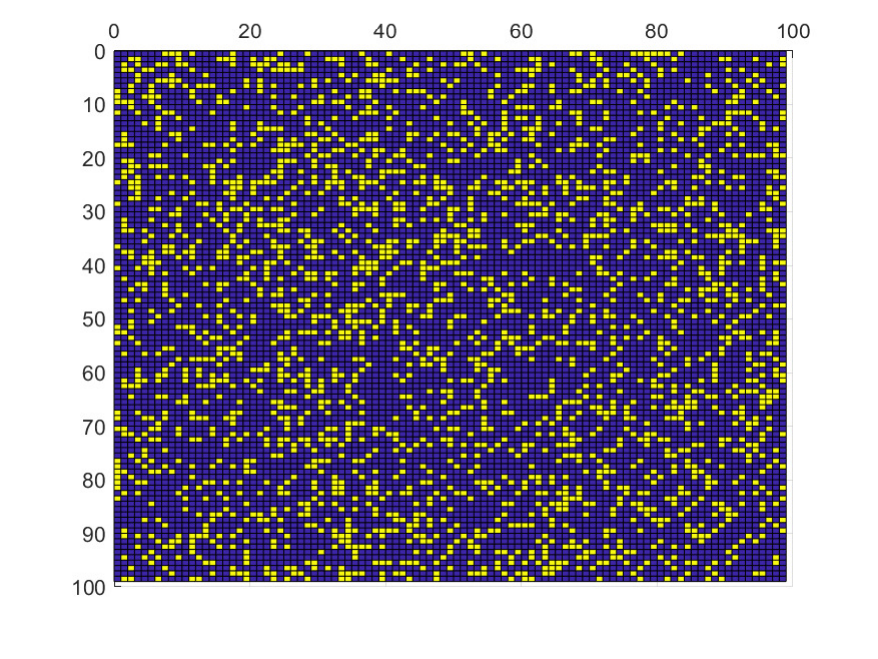}}\label{fig.LTIGraph}} \hfill
	\subfigure[Estimation Errors of Coupling Strengths, as achieved by Algorithm \ref{alg.General}]{\scalebox{.46}{\includegraphics{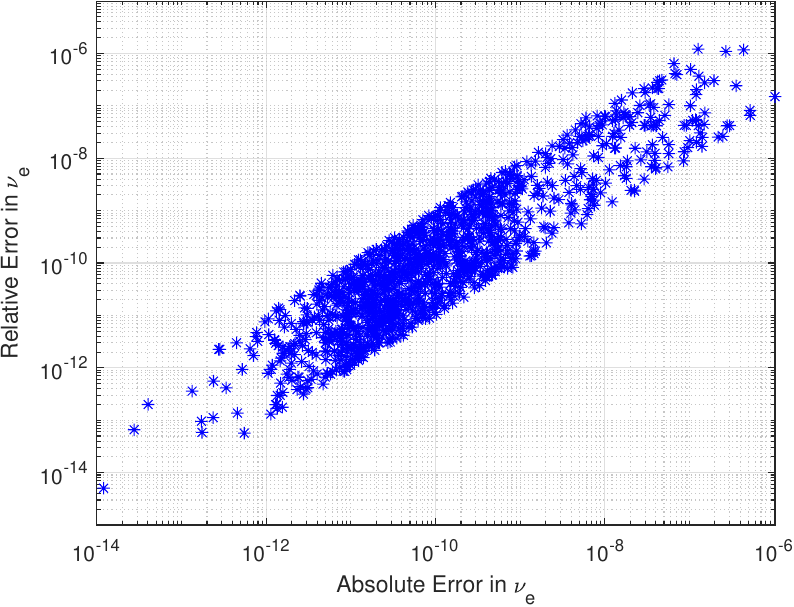}}\label{fig.LTIErrors}}
	\vspace{-5pt}
  \caption{Network identification of a network of oscillators with dry friction.}
  \vspace{-15pt}
\end{center}
\end{figure*}

\vspace{-10pt}
\subsection{Oscillators with Dry Friction}
We consider a network of $n=100$ oscillators, affected by dry friction \cite{vanderLinden1993}. These are governed by the equation $\ddot{x}_i + F_i{\rm sgn}(\dot{x}_i) + \omega_i^2 x_i = u_i$, where $x_i$ is the position of the mass, $F_i$ is the dry friction term, $\omega_i$ is the undamped angular frequency, and $u_i$ is the input. The oscillators form a diffusively-coupled MAN on a graph $\G$, in which each edge appears with probability $p=0.25$ independently from all other edges, and the edge controllers are static unit gains. We note that the agent dynamics are not smooth, as the ${\rm sgn}$ function is discontinuous at $0$. However, the steady-state relation for the agents is given by $k^{-1}_i(\mathrm y_i) = \omega_i^{-2}\mathrm y_i$, which is smooth.

For our simulation, the parameters $F_i,\omega_i$ were chosen log-uniformly between $1$ and $10$, and the unknown coupling coefficients $\nu_{ij}$ were chosen log-uniformly between $0.1$ and $10$. Algorithm \ref{alg.General} was run with $\epsilon = 0.01$ and $\kappa = 0.1$. Instead of waiting for convergence, the switching signal changed its value every 100 seconds. The adjacency matrix of the randomly chosen graph is available in Fig. \ref{fig.LTIGraph}. The algorithm correctly identified all edges existing in $\G$, and Fig. \ref{fig.LTIErrors} shows the absolute and relative errors calculating the weights $\nu_{ij}$. The maximal absolute error is about $\num{1.02e-6}$, and the maximal relative error is about $\num{1.21e-6}$.

\vspace{-10pt}
\subsection{Opinion Dynamics and Finite-Time Consensus}
We consider a collection of $n=100$ agents implementing the finite-time consensus protocol for opinion dynamics appearing in \cite{Shi2018}, given by $\dot{x}_i = c_i{\rm sgn}\left(-\sum_{j=1}^n l_{ij}x_{j}\right) + w_i$, where $c_i$ represents the conformity of the $i$-th agent, and $l_{ij}$ represent the strength of the relationship between agents $i$ and $j$. The matrix $L = (l_{ij})$ is assumed to be a graph Laplacian of some (undirected) graph $\G$, $L = \E_\G N \E_\G^\top$. This is a diffusively-coupled MAN with agents $\Sigma_i: \dot{x}_i = {\rm sgn}(u_i), y_i = x_i$ and static gain controllers equal to $\nu_{ij}$, i.e. $g_{ij}$ is the identity map. The dynamics of the network are discontinuous, as ${\rm sgn}$ is discontinuous at $0$. However, the steady-state relations of the agents are given by $k^{-1}_i (\mathrm y_i) = 0$, and are therefore smooth.

For our simulation, we let $\G$ be a random graph on $n=100$ nodes, in which each edge appears with probability $p=0.15$, independently from all other edges. The parameters $c_i$ were chosen log-uniformly $0.1$ between $10$, and the weights $\nu_{ij}$ were chosen log-uniformly between $1$ and $10$. 
Algorithm \ref{alg.General} was run with $\kappa = 0.1$ and $\epsilon = 0.01$. As with the previous case study, we did not wait for convergence, but instead changed the value of the switching signal every $50$ seconds. The adjacency matrix of the randomly chosen graph is available in Fig. \ref{fig.LTIGraph}. The algorithm correctly identified all edges existing in $\G$, and Fig. \ref{fig.LTIErrors} shows the absolute and relative errors calculating the weights $\nu_{ij}$. The maximal absolute error is about $\num{1.02e-5}$, and the maximal relative error is about $\num{2.92e-6}$. 
\begin{figure*}[!t]
\begin{center}
	\subfigure[Adjacency Matrix of the Graph. Yellow entries are equal to 1, blue entries are equal to 0.] {\scalebox{.46}{\includegraphics{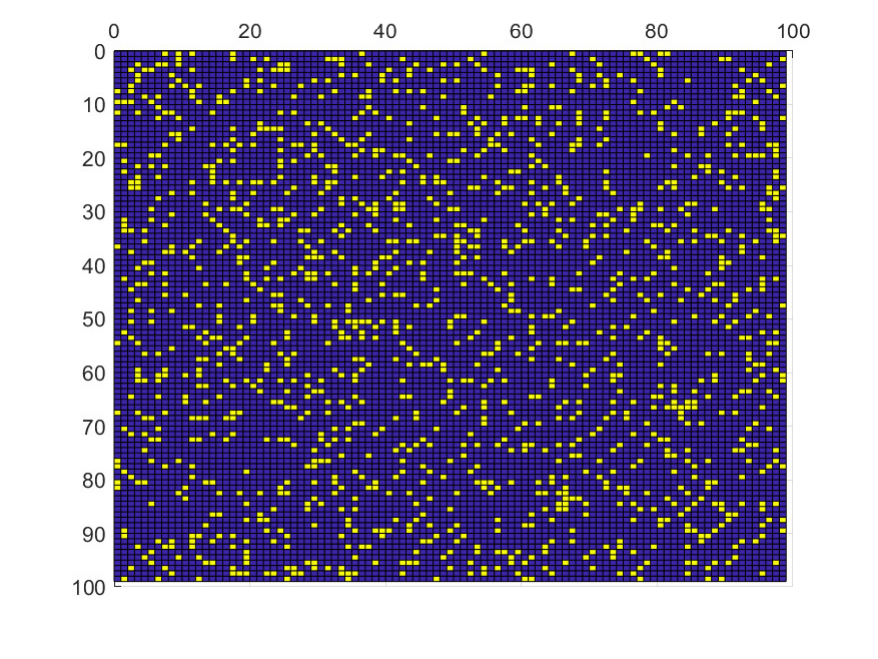}}\label{fig.NNGraph}}\hfill
	\subfigure[Estimation Errors of Coupling Strengths, as achieved by Algorithm \ref{alg.General}]{\scalebox{.46}{\includegraphics{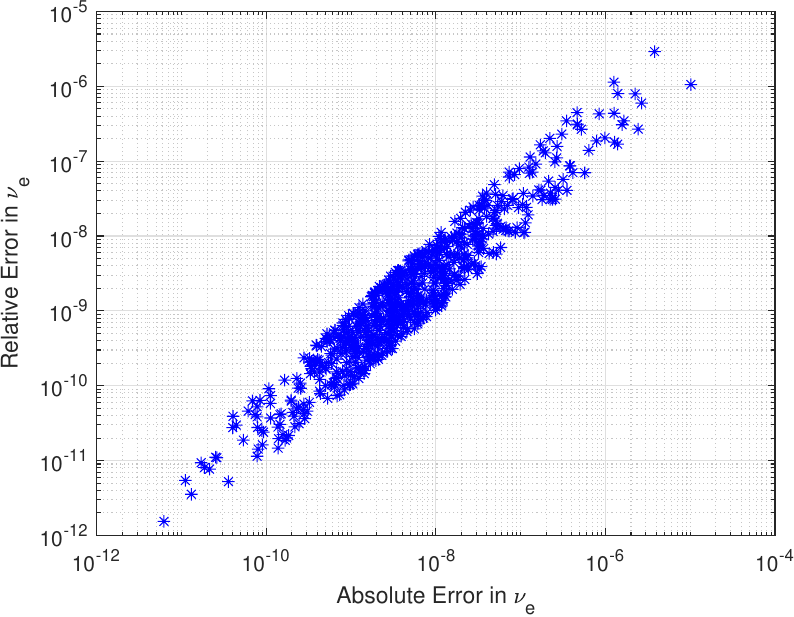}}\label{fig.NNErrors}}
  \caption{Network identification for a network of agents running a finite-time consensus protocol.}
  \vspace{-15pt}
\end{center}
\end{figure*}

\vspace{-5pt}
\section{Conclusion}
We presented a network identification algorithm using probing inputs, with no prior knowledge on the network but only on the agents and the controllers. This was done by injecting a prescribed switching signal, achieved for globally converging networks, allowing identification of the underlying network in a very general case. The resulting algorithm had sub-cubic time complexity. We discussed the different assumptions of the algorithm, and presented a lower bound on the complexity of any algorithm solving the network identification problem, proving that the presented algorithm is optimal in sense of time complexity. We demonstrated the results in simulation, showing the algorithm can be applied for large networks with non-smooth dynamics. \textcolor{black}{While the analysis was carried out for the case of SISO agents for the sake of simplicitly, the ideas all generalize easily to MIMO agents with equal input and output dimension, with minor changes to Algorithm \ref{alg.General}.}

\vspace{-5pt}
\bibliographystyle{ieeetr}
\bibliography{root}

\appendices
\section{Proof of Theorem \ref{thm.GeneralIndependence}} \label{append.ProofIndependence}

We start by stating and proving a lemma, which will allow us to assume that $k^{-1}$ and $g$ are both differentiable at \textcolor{black}{$\mathrm w^{(0)}$}.
\begin{lem}
Suppose that the same assumptions as in Theorem \ref{thm.GeneralIndependence} hold. Then for any $i\in\{1,\ldots,n\}$ and any number $\mathrm x \in \mathbb{R}$, the \textcolor{black}{ set $\mathcal{S}$} of all $\mathrm w\in \mathbb{R}^n$ such that the solution $\mathrm y$ to $\mathrm w = k^{-1}(\mathrm y) + \E_\G N g(\E_\G^\top\mathrm y)$ satisfies $\mathrm{y_i = x}$ has measure zero.
\end{lem}

\begin{proof}
We consider the map $G:\mathbb{R}^n\to \mathbb{R}^n$ defined by $G(\mathrm y) = k^{-1}(\mathrm y) + \E_\G N g(\E_\G^\top\mathrm y)$. The relevant set $\mathcal{S}$ is the image of \textcolor{black}{
$\mathcal{R} = \{\mathrm y \in \mathbb{R}^n : \mathrm{y_i = x}\}$} under $G$. The assumption on $k^{-1},g$ implies that $G$ is continuous and piecewise smooth, hence locally Lipschitz. Thus, $G$ is absolutely continuous, sending zero-measure sets to zero-measure sets. As $\mathcal{R}$ has measure zero, we conclude that $\mathcal{S}$ also has measure zero.
\end{proof}

\begin{cor} \label{cor.GeneralPoint}
Under the same assumptions as in Theorem \ref{thm.GeneralIndependence}, with probability $1$, the functions $k^{-1}$ and $g$ are twice differentiable at \textcolor{black}{$\mathrm y^{(0)}, \E_\G^\top \mathrm y^{(0)}$} respectively, and the differential $\nabla g$ is a positive-definite diagonal matrix.
\end{cor}

We can now prove Theorem \ref{thm.GeneralIndependence}.
\begin{proof}
By Corollary \ref{cor.GeneralPoint}, we can assume $k^{-1}$ is twice differentiable at \textcolor{black}{$\mathrm y^{(0)}$}. Under this assumption, we can write the following equation connecting $\delta \textcolor{black}{\mathrm y^{(i)}}$ and $\delta \textcolor{black}{\mathrm w^{(i)}}$ for $i=1,2,\ldots,n$,

\textcolor{black}{
\vspace{-8pt}
\small
\begin{align*} %\label{eq.AppendixConnection}
\delta \mathrm w^{(i)} = \left[\nabla k^{-1}(\mathrm y^{(i)}) + \E_\G N \nabla g(\E_\G^\top \mathrm y^{(0)}) \E_\G^\top\right] \delta \mathrm y^{(i)} + O(\|\delta \mathrm y^{(i)}\|^2),
\end{align*}\normalsize}
which follows from Proposition \ref{prop.QuadraticOutputError} in the case $\nabla k^{-1}(\textcolor{black}{\mathrm y^{(0)}}) \neq 0$ or $i<n$, and uses $\E_\G^\top \mathbbm{1}_n = 0$ otherwise. Because $\kappa$ is small, and $k^{-1}$ and $g$ are twice differentiable at \textcolor{black}{$\mathrm y^{(0)}$ and $\E_\G^\top \mathrm y^{(0)}$}, we can conclude that \textcolor{black}{$\|\delta \mathrm y^{(i)}\| = O(\|\delta \mathrm w^{(i)}\|)$}. Thus, recalling the definition of $\mathcal{M}$, we can rewrite the previous equation as:

\textcolor{black}{
\vspace{-10pt}
\small
\begin{align} \label{eq.12}
\delta \mathrm w^{(i)} - O(\|\delta \mathrm w^{(i)}\|^2) = \mathcal{M}\delta \mathrm y^{(i)}.
\end{align}
\normalsize
\vspace{-10pt}}

We first focus on the case $\nabla k^{-1} (\textcolor{black}{\mathrm y^{(0)}}) \neq 0$. The matrix $\mathcal{M}$ is invertible, as Assumption \ref{Assumption2} implies that $\nabla k^{-1} \ge 0,\nabla g \ge 0$ \cite{Sharf2018b}. Thus $\mathcal{A}$ is linearly independent if and only if the vectors on the left-hand side of \eqref{eq.12}, $\kappa \mathrm e_i - \textcolor{black}{\mathrm z^{(i)}}$ for some vectors $\textcolor{black}{\mathrm z^{(i)}}$ satisfying $||\textcolor{black}{\mathrm z^{(i)}}|| = O(\kappa^2)$, are linearly independent. Thus, they are linearly independent if $\kappa$ small, and $\mathcal{A}$ is a basis.

As for the case $\nabla k^{-1}(\textcolor{black}{\mathrm y^{(0)}}) = 0$, we note $ \E_\G N \nabla g(\E_\G^\top \textcolor{black}{\mathrm y^{(0)}}) \E_\G^\top$ preserves the space orthogonal to $\mathbbm{1}_n$. As \textcolor{black}{$\delta \mathrm y^{(1)},\ldots,\delta \mathrm y^{(n-1)}$} are orthogonal to $\delta \textcolor{black}{\mathrm y^{(n)}} = \kappa \mathbbm{1}_n$, it is enough to show that the former are linearly independent. Moreover, as the map $ \E_\G N \nabla g(\E_\G^\top \textcolor{black}{\mathrm y^{(0)}}) \E_\G^\top$ is invertible on the space $\mathbbm{1}_n^\perp$, it is enough to prove that the vectors on the left hand side of equation \eqref{eq.12} are linearly independent. However, these vectors are of the form $\kappa(\mathrm e_i-\mathrm e_n) - O(\kappa^2)$, which are linearly independent if $\kappa$ is small enough, similarly to the first case. Thus $\mathcal{A}$ is a basis. This concludes the proof.
\end{proof}

\begin{rem} \label{rem.QuadraticInputError}
Note that we used the twice differentiability assumption to get \textcolor{black}{$\delta\mathrm y^{(i)} = O(\|\delta \mathrm w^{(i)}\|)$}. In particular, the error rate in Proposition \ref{prop.QuadraticOutputError} is $O(\|\delta \textcolor{black}{\mathrm w^{(i)}}\|^2)$. 
\end{rem}

\section{Proofs of Error Bounds} \label{append.ProofAlgorithm}
This appendix is dedicated to proving Theorem \ref{thm.GeneralErrorRateAndComplexity} and Proposition \ref{prop.MeasError}. We start by stating and proving a lemma, which will allow us to efficiently bound the term $\|\delta Y^{-1}\|$.
\begin{lem} \label{lem.InvW}
Let $\delta W$ be the matrix computed by Algorithm \ref{alg.General}. Then it is the product of no more than $O(n)$ elementary matrices, and the operator norm of $\delta W^{-1}$ is bounded by $2/\kappa$.
\end{lem}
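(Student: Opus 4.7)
The plan is to handle the two cases arising in Algorithm \ref{alg.General} separately, since in each case $\delta W$ has an explicit structure that admits both a short sequence of row operations and a closed-form inverse.

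If $\nabla k^{-1}(\mathrm y_0)\neq 0$, by construction $\delta\mathrm w_i=\kappa\mathrm e_i$, so $\delta W=\kappa I_n$. This is already the product of $n$ row-scaling elementary matrices, and $\delta W^{-1}=(1/\kappa) I_n$ has operator norm $1/\kappa\le 2/\kappa$.

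If $\nabla k^{-1}(\mathrm y_0)=0$, then $\delta\mathrm w_i=\kappa(\mathrm e_i-\mathrm e_n)$ for $i=1,\ldots,n-1$ and $\delta\mathrm w_n=\kappa\mathbbm{1}_n$, giving the explicit block form $\delta W=\kappa\begin{pmatrix} I_{n-1} & \mathbbm{1}_{n-1}\\ -\mathbbm{1}_{n-1}^T & 1\end{pmatrix}$. I would row-reduce $\delta W$ to $I_n$ as follows: (i) $n-1$ additions of rows $1,\ldots,n-1$ into row $n$, turning row $n$ into $(0,\ldots,0,n\kappa)$; (ii) one scaling of row $n$ by $1/n$; (iii) $n-1$ subtractions of the new row $n$ from rows $1,\ldots,n-1$, leaving $\kappa I_n$; (iv) $n$ scalings of individual rows by $1/\kappa$. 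This uses $3n-1=O(n)$ elementary row operations $E_1,\ldots,E_{3n-1}$ with $E_{3n-1}\cdots E_1\,\delta W=I$. Since the inverse of an elementary matrix is again elementary, $\delta W=E_1^{-1}\cdots E_{3n-1}^{-1}$ is a product of $O(n)$ elementary matrices.

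For the norm bound in the second case, I would invert $\delta W$ by hand. Solving $\delta W x=y$ and setting $s=(1/n)\sum_{j=1}^n y_j$, one obtains $x_n=s/\kappa$ and $x_i=(y_i-s)/\kappa$ for $i<n$. Using the identity $\sum_{i=1}^n(y_i-s)^2=\|y\|^2-ns^2$, it follows that $\kappa^2\|\delta W^{-1}y\|^2=\sum_{i=1}^{n-1}(y_i-s)^2+s^2=\|y\|^2-(n-1)s^2-(y_n-s)^2\le \|y\|^2$, and hence $\|\delta W^{-1}\|\le 1/\kappa\le 2/\kappa$. No genuine obstacle arises; the statement is essentially a bookkeeping exercise, and the only thing that needs care is organizing the row operations cleanly and recognizing the algebraic identity that produces the norm bound.
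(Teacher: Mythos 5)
Your proof is correct, and while the elementary-matrix count follows essentially the paper's own route, your norm bound does not. For the factorization, the paper pulls out $\kappa$ and reduces $F=\delta W/\kappa$ to the identity with the same three sweeps you use (add rows $1,\dots,n-1$ to row $n$, scale row $n$ by $1/n$, subtract row $n$ from the others), getting $2n-1$ operations versus your $3n-1$ — both $O(n)$, so no difference of substance. For the bound on $\|\delta W^{-1}\|$, the paper re-applies those same row operations to $\mathrm{Id}_n$ to obtain the closed form $F^{-1}=\tfrac{1}{n}\upxi\mathbbm{1}_n^T+(\mathrm{Id}_n-\mathrm e_n\mathrm e_n^T)$ with $\upxi=[-1,\dots,-1,1]^T$, and then uses the triangle inequality and $\|\upxi\mathbbm{1}_n^T\|=\|\upxi\|\|\mathbbm{1}_n\|$ to get $\|F^{-1}\|\le 2$. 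You instead solve $\delta W x=y$ explicitly via the mean $s=\tfrac1n\sum_j y_j$ and the identity $\sum_{i=1}^n(y_i-s)^2=\|y\|^2-ns^2$, which yields $\|\delta W^{-1}\|\le 1/\kappa$. Your bound is in fact tight: since the columns of $F$ satisfy $(\mathrm e_i-\mathrm e_n)^T\mathbbm{1}_n=0$, one has $F^TF=\begin{pmatrix} I_{n-1}+\mathbbm{1}_{n-1}\mathbbm{1}_{n-1}^T & 0\\ 0 & n\end{pmatrix}$, whose minimal eigenvalue is exactly $1$, so $\|F^{-1}\|=1$. The paper's argument is economical in that the explicit inverse is a free byproduct of the elimination already needed for the elementary-matrix count (which is also what Remark \ref{rem.Omega1Implementation2} exploits), while your argument buys a sharper constant ($1/\kappa$ rather than $2/\kappa$) that would marginally tighten the hidden constants in the error bounds of Theorem \ref{thm.GeneralErrorRateAndComplexity} and Proposition \ref{prop.MeasError}; either version suffices for the lemma as stated.
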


\begin{proof}
If $\nabla k^{-1} (\textcolor{black}{\mathrm y^{(0)}}) \neq 0$, then $\delta W=\kappa \mathrm{Id}_n$, and the result is clear.
If $\nabla k^{-1} (\textcolor{black}{\mathrm y^{(0)}}) = 0$, then $\delta W$ is equal to $\kappa F$, where the columns of $F$ are given by $\mathrm e_i - \mathrm e_n$ for $i=1,\ldots,n-1$ and $\mathbbm{1}_n$. Thus $(\delta W)^{-1} = \kappa^{-1}F^{-1}$, and it suffices to show that the operator norm of $F^{-1}$ is bounded by $2$, and that $F$ is the product of no more than $O(n)$ elementary matrices.

We run a Gaussian elimination procedure on the matrix $F$. Each row operation corresponds to multiplication by an elementary matrix, so it suffices to show that the procedure halts after $O(n)$ steps. We first consider row operations of the form $R_n \to R_n + R_i$ for $i=1,\ldots,n-1$, i.e. adding row $i$ to row $n$. These are $n-1$ total row operations, leaving all first $n-1$ rows unaltered, and changing the last row of the matrix to $\left[\begin{smallmatrix}0&\cdots&0&n\end{smallmatrix}\right]$. We now divide the $n$-th row by $n$, which is another row operation, altering the last row to $\left[\begin{smallmatrix}0&\cdots&0&1\end{smallmatrix}\right]$. Lastly, we apply the row operations $R_i \to R_i - R_n$ for $i=1,\ldots,n-1$. These operations nullify all nonzero off-diagonal elements, achieving an identity matrix. Thus,we transformed the matrix $F$ to the identity matrix by applying a total of $2n-1$ row operations, hence $F$ is the product of $2n-1$ elementary matrices.

We now compute $F^{-1}$ by applying the previous row operations, which transformed $F$ to ${\rm Id}_n$, in order to transform ${\rm Id}_n$ to $F^{-1}$. First, we applied the row operations $R_n \to R_n + R_j$ for $j=1,\ldots,n-1$. This leaves all rows but the last unaltered, and the last row becomes $\left[\begin{smallmatrix}1&\cdots&1\end{smallmatrix}\right]$. Then, we applied the row operation $R_n \to \frac{1}{n} R_n$, dividing the last row by $n$. Lastly, we applied the row operations $R_j \to R_j - R_n$ for $j=1,\ldots,n-1$, subtracting $\frac{1}{n}\left[\begin{smallmatrix}1&\cdots&1\end{smallmatrix}\right]$ from each row but the last. Thus, the matrix $F^{-1}$ is the sum of two matrices, $\frac{1}{n}\upxi\mathbbm{1}_n^\top$, where $\upxi = [-1,\ldots,-1,1]^\top$, and the diagonal matrix $\mathcal{I} = \mathrm{Id}_n - \mathrm e_n \mathrm e_n^\top$, having all diagonal entries equal to $1$, but the last, which is equal to $0$. We conclude
\begin{align*}
\|F^{-1}\| \le \frac{1}{n} \|\upxi\mathbbm{1}_n^\top\| + \|\mathcal{I}\| \le \frac{1}{n}\|\mathbbm{1}_n\|\|\upxi\| + 1 = 1+1 = 2,
\end{align*}
completing the proof.
\end{proof}

We can now prove Theorem \ref{thm.GeneralErrorRateAndComplexity}:
\begin{proof}
We start by proving the floating point operations estimate. The first part of the algorithm, before the for-loop, takes $O(n^2)$ time, mostly to initialize $\delta \textcolor{black}{\mathrm w^{(i)}}$ for $i=1,\ldots,n$. The first for-loop takes $O(n^2)$ time, as in each iteration we store $\textcolor{black}{\mathrm y^{(i)}}$ in memory and multiply \textcolor{black}{$J(\mathrm y^{(i)} - \mathrm y^{(0)})$}, each taking $O(n)$ time, as $J$ is either an identity matrix or a projection on  $\mathbbm{1}_n^\perp$. The computation between the two for-loops takes $O(n^\omega)$ time. Lastly, the last for-loop also takes $O(n^2)$ time. The result is now obtained since $\omega \ge 2$.

We now prove the error bound. By definition, the matrix $M^{\prime}$ satisfies \textcolor{black}{$M^{\prime}\delta \textcolor{black}{\mathrm y^{(i)}} = \delta \mathrm w^{(i)}$} for $i =1\ldots,n$. Using Corollary \ref{cor.GeneralPoint} and Remark \ref{rem.QuadraticInputError}, we conclude that for any $i\in \V$, \textcolor{black}{$\|\delta \mathrm w^{(i)} - \mathcal M^{\prime} \mathrm \delta \mathrm y^{(i)}\| = O(\|\delta\mathrm w^{(i)}\|^2)$}, where $\mathcal{M}^\prime = \mathcal M + Q$. Thus, for any $i\in \V$ we have \textcolor{black}{$\|(M^{\prime}-\mathcal M^{\prime})\delta \mathrm y^{(i)}\|\le O(\max_i \|\delta \mathrm w^{(i)}\|^2) = O(\kappa^2)$}, and we conclude that:
\begin{align*}
\|(M^{\prime}-\mathcal M^{\prime})\delta Y\| \le \sqrt{\sum_{i\in \V}O(\kappa^2)^2} = O(\sqrt{n}\kappa^2).
\end{align*}

By submultiplicativity of the operator norm, we conclude that  $\|M^{\prime}-\mathcal M^{\prime}\| \le \|(M^{\prime}-\mathcal M^{\prime})\delta Y\|\|\delta Y ^{-1}\|$, implying that $\|M^{\prime}-\mathcal M^{\prime}\| \le  O\left(\sqrt{n}\kappa^2\|\delta Y^{-1}\|\right)$. As $M^{\prime} - \mathcal M^{\prime} = M - \mathcal M$, we conclude that $\|M-\mathcal{M}\| \le O\left(\sqrt{n}\kappa^2\|\delta Y^{-1}\|\right)$, yielding the same bound for all entries $|M_{ij} - \mathcal M_{ij}|$.

We now wish to estimate $\|\delta Y^{-1}\|$. We define \textcolor{black}{$\delta \mathrm v^{(i)} = \mathcal M^\prime \delta \mathrm y^{(i)}$} for $i=1,\ldots,n$, so equation \eqref{eq.12} reads \textcolor{black}{$\delta \mathrm v^{(i)} = \delta \mathrm w^{(i)} - O(\kappa^2)$}. Define \textcolor{black}{$\delta V = \left[\begin{smallmatrix} \delta \mathrm v^{(1)}&\cdots&\delta \mathrm v^{(n)}\end{smallmatrix}\right]$}, so that $\delta V = \delta W - O(\kappa^2)$. By multiplying both sides by $\delta W^{-1}$ and using Lemma \ref{lem.InvW}, we conclude that $\delta W^{-1}\delta V = \mathrm{Id}_n - O(\kappa)$, or equivalently, $\delta V^{-1}\delta W = \mathrm{Id}_n + O(\kappa)$. As $\delta Y^{-1} = \mathcal{M}^\prime \delta V^{-1}$, we can again use the submultiplicativity of the operator norm:
\begin{align*}
\|\delta Y^{-1}\| \le \|\mathcal M^\prime\| \cdot \|\delta V^{-1}\| \le \|\mathcal M^\prime\| \cdot \|\delta V^{-1}\delta W\| \cdot \|\delta W^{-1}\|.
\end{align*} 
We can now estimate each factor on its own. Lemma \ref{lem.InvW} implies that $\|\delta W^{-1}\| = O(\kappa^{-1})$. Moreover, $\delta W^{-1}\delta V = \mathrm{Id}_n - O(\kappa)$ implies that $\|\delta V^{-1} \delta W\| = 1+O(\kappa)$. Lastly, we have:
\begin{align*}
\|\mathcal M^\prime\| \le&~ \|\nabla k^{-1}(\textcolor{black}{\mathrm y^{(0)}})\| + \|\E_\G N \nabla g(\E^\top_\G \textcolor{black}{\mathrm{y}^{(0)}}) \E_\G^\top\| + \|Q\| \\ \le&~ O(1) + \max_{i,j} (\nu_{ij}d_{ij})\lambda_{\text{max}}(\G) + O(1),
\end{align*}
where $\|Q\|\le 1$ in both cases $Q = 0$ and $Q = \frac{1}{n}\mathbbm{1}_n\mathbbm{1}_n^\top$. 
\end{proof}

We now move to proving Proposition \ref{prop.MeasError}:
\begin{proof}
It's enough to bound $\|\delta W (\delta Y^{-1} - (\delta Y + \Delta Y)^{-1})\|$. Using the submultiplicativity of the operator norm, we can bound each factor on its own. For the first factor, we note that $\delta W$ is either equal to $\kappa \mathrm{Id}_n$, or to $\kappa(\mathrm{Id}_n - \mathrm e_n \mathbbm{1}_n^\top + \mathbbm{1}_n \mathrm e_n^\top)$, depending on whether $\nabla k^{-1}(\textcolor{black}{\mathrm y^{(0)}}) = 0$ or not. In both cases, $\|\delta W\| \le \kappa \sqrt{n}$. As for the second factor, we can bound it as $\|\delta Y^{-1}\|\|\mathrm{Id_n} - (\mathrm{Id}_n + \Delta Y \delta Y^{-1})^{-1}\|$. The assumption $||\Delta Y|| \ll ||\delta Y||$ implies that $(\mathrm{Id}_n + \Delta Y \delta Y^{-1})^{-1} \approx \mathrm{Id}_n - \Delta Y \delta Y^{-1}$ up to a second order error, which in turn gives the desired bound, as $\|\delta Y^{-1}\| \le O(\kappa^{-1}(1+\max_{i,j}(\nu_{ij}d_{ij})\lambda_{\mathrm{max}}(\G)))$, see the proof of Theorem \ref{thm.GeneralErrorRateAndComplexity}.
\end{proof}

\section{Proof of Lemma \ref{lem.Numerics}} \label{append.Numerics}
\begin{proof}
The output $y(t)$ at any time $t$ can be written as a convolution integral, $y(t) = \int_{0}^t e^{-\varrho \xi P} w(t-\xi)d\xi$.
By assumption, $w$ is continuously differentiable $q+1$ times, so by Lagrange's form of the remainder in Taylor's theorem, we have
\begin{align} \label{eq.wTaylorSeries}
&w(t-\xi) = \sum_{j=0}^q (-1)^j\frac{d^j w(t)}{dt^j} \frac{\xi^j}{j!} + \frac{d^{q+1}w(\tilde{t})}{dt^{q+1}}  \frac{(-\xi)^{q+1}}{(q+1)!},
\end{align}
for some point $\tilde{t}\in[t-\xi,t]$. We plug this expression inside the integral describing $y(t)$. For the first $q$ summands, we end up with integrals for the form $\int_{0}^t e^{-\varrho \xi P} \frac{d^jw(t)}{dt^j} \frac{\xi^j}{j!}$. The following formula will be used to compute them \cite[Formula 2.321.2]{Zwillinger2014}:
\begin{align} \label{eq.IntegralOfGammaFunction}
\int x^n e^{cx} dx = e^{cx}\sum_{i=0}^n (-1)^{n-i} \frac{n!}{i! c^{n-i+1}} x^i + {\rm constant},
\end{align}

The matrix $P$ is positive-definite, so we can write it as $P = \sum_{k=1}^m \lambda_k \mathrm v_k \mathrm v_k^\top$, where $\lambda_k>0$ are $P$'s eigenvalues and $\mathrm v_k$ are its eigenvectors satisfying $||\mathrm v_k|| = 1$. For any $\xi,\varrho$, we have that $e^{-\varrho\xi P} = \sum_{k=1}^m e^{-\varrho\lambda_k\xi} \mathrm v_k \mathrm v_k^\top$. Thus, we have that:
\begin{align*}
&\int_{0}^t e^{-\varrho \xi P} \frac{d^jw(t)}{dt^j} \frac{\xi^j}{j!}d\xi
=\sum_{k=1}^m\frac{\mathrm v_k \mathrm v_k^\top}{j!} \frac{d^jw(t)}{dt^j} \int_{0}^t e^{-\varrho\lambda_k \xi} \xi^j d\xi = 
\\&  \sum_{k=1}^m\frac{\mathrm v_k \mathrm v_k^\top}{j!} \frac{d^jw(t)}{dt^j} \left[  e^{-\varrho\lambda_k\xi}\sum_{i=0}^j  \frac{(-1)^{j-i}j!}{i! (-\varrho\lambda_k)^{j-i+1}} \xi^i\right]_{\xi = 0}^{t} =
\\& \sum_{k=1}^m\frac{\mathrm v_k \mathrm v_k^\top}{j!} \frac{d^jw(t)}{dt^j} \left[-e^{-\varrho\lambda_k\xi}\sum_{i=0}^j  \frac{j!}{i! (\varrho\lambda_k)^{j-i+1}} \xi^i\right]_{\xi=0}^t =
\\& \sum_{k=1}^m{\mathrm v_k }\mathrm v_k^\top \frac{d^jw(t)}{dt^j} \left[\frac{1}{(\varrho\lambda_k)^{j+1}}-e^{-\varrho\lambda_kt}\sum_{i=0}^j  \frac{t^i}{i! (\varrho\lambda_k)^{j-i+1}}\right].
\end{align*}
Using functional calculus, we conclude the integral is equal to:
$\left[\frac{P^{-j-1}}{\varrho^{j+1}} -\sum_{i=0}^j  \frac{t^i e^{-\varrho t P} P ^{i-j-1}}{i! \varrho^{j-i+1}}\right]\frac{d^jw(t)}{dt^j}$.
By \eqref{eq.wTaylorSeries}, we get:\small
\begin{align} \label{eq.ApproxY}
y(t)\approx \sum_{j=0}^k (-1)^j \left[\frac{P^{-j-1}}{\varrho^{j+1}} -\sum_{i=0}^j  \frac{t^i e^{-\varrho t P} P ^{i-j-1}}{i! \varrho^{j-i+1}}\right]\frac{d^jw(t)}{dt^j},
\end{align}\normalsize
with an error of the form $\int_0^t e^{-\varrho\xi P}\frac{d^{q+1}w(\tilde{t})}{dt^{q+1}}  \frac{\xi^{q+1}}{(q+1)!}$. We claim that if $\varrho$ is large enough, then $y(T) = \frac{1}{\varrho}P^{-1}w(T)$ up to a relative error of magnitude no larger than $\e$.

If $T \ge \Delta t$ and $\varrho\underline{\sigma}(P) \gg \Delta t$, then
%\begin{align*}
$\frac{1}{\varrho}P^{-1}\approx \sum_{j=0}^k (-1)^j\left[\frac{P^{-j-1}}{\varrho^{j+1}} -\sum_{i=0}^j  \frac{T^i e^{-\varrho t P} P ^{i-j-1}}{i! \varrho^{j-i+1}}\right]$,
%\end{align*}
up to a relative error of magnitude no larger than $\e/2$. Indeed, the $j$-th element in the sum behaves as $O(\frac{1}{\varrho^{j+1}\underline{\sigma}(P)^{j+1}})$ plus a term decreasing exponentially fast with $\varrho$ (for fixed $P,T$). The error term in \eqref{eq.ApproxY} can also be bounded similarly - if we denote $ \mu = \max_{t\in [0,T]} \left\|\frac{d^{q+1}w(t)}{dt^{q+1}}\right\|$, then the norm of the error term in \eqref{eq.ApproxY} is bounded by:
\begin{align*}
&~ M \int_0^{T} \frac{\xi^{q+1}\|e^{-\varrho\xi P}\|}{(q+1)!}d\xi \le M\int_0^{T} \frac{\xi^{q+1}e^{-\varrho\underline{\sigma}(P)\xi}}{(q+1)!}d\xi
\\ =~ &~ \frac{M}{(q+1)!}\left[e^{-\varrho\underline{\sigma}(P)\xi}\sum_{i=0}^{q+1} \frac{(-1)^{q+1-i}(q+1)!}{i! (-\varrho\underline{\sigma}(P))^{q+1-i+1}} \xi^i\right]_{\xi = 0}^{T}
\\ =~ &~ \left[-e^{-\varrho\underline{\sigma}(P)\xi}\sum_{i=0}^{q+1} \frac{M}{i! (\varrho\underline{\sigma}(P))^{q+1-i+1}} \xi^i\right]_{\xi = 0}^{T}
\\ =~ &~ \frac{M}{(\varrho\underline{\sigma}(P))^{q+2}} -e^{-\varrho\underline{\sigma}(P)T}\sum_{i=0}^{q+1} \frac{M}{i! (\varrho\underline{\sigma}(P))^{q+1-i+1}} T^i.
\end{align*}
The first element is of order $O(\frac{1}{\varrho^{q+2}})$, and the second decays exponentially with $\varrho$ (for fixed $M,P,T$). Thus, if $\rho$ is large enough, then $y(T) = \frac{1}{\varrho} P^{-1}w(T)$, up to a relative error of order of magnitude no larger than $\e$. More specifically, this happens for any $\rho>\rho_0$, where $\rho_0$ is a threshold depending on the matrix $P$, the sample time $T\ge \Delta t$, and the signal $w(t)$ (through $M$). 

As for derivatives of $y(t)$ at $t=T$, one can use the higher-order terms of \eqref{eq.ApproxY} together with the error estimate and $\dot{y} = -\varrho Py + w$ to conclude that $\frac{dy}{dt}(T) = -\varrho P y(T) + w(T) = \frac{1}{\varrho}P^{-1} \frac{dw}{dt}(T)$ up to a relative error of magnitude no larger than $\e$, provided that $\varrho$ exceeds some threshold. Similarly, $\frac{d^j y}{dt^j} = -\varrho P \frac{d^{j-1}y}{dt^{j-1}} + \frac{d^{j-1}w}{dt^{j-1}}$ for all integers $0 \le j\le q$, allowing one to argue by induction that $\frac{d^j y}{dt^j}(T) = \frac{1}{\varrho}P^{-1} \frac{d^j w}{dt^j}(T)$ up to a relative error of magnitude no larger than $\e$, provided that $\varrho$ is large enough. This completes the proof.
\end{proof}

\end{document}